\documentclass[11pt]{article}

\usepackage[letterpaper,margin=1in]{geometry}

\usepackage{amsmath,amssymb,amsthm,mathtools}
\usepackage{graphicx}
\usepackage{xcolor}
\usepackage{microtype}
\usepackage{comment}

\usepackage[colorlinks=true,linkcolor=blue,citecolor=blue,urlcolor=blue]{hyperref}
\usepackage[nameinlink,capitalise,noabbrev]{cleveref}

\usepackage{enumitem}
\usepackage{booktabs}
\usepackage{algorithm}
\usepackage{algorithmic}

\newtheorem{theorem}{Theorem}[section]
\newtheorem{lemma}[theorem]{Lemma}
\newtheorem{proposition}[theorem]{Proposition}
\newtheorem{corollary}[theorem]{Corollary}

\newtheorem{construction}[theorem]{Construction}

\theoremstyle{definition}
\newtheorem{definition}[theorem]{Definition}

\theoremstyle{remark}
\newtheorem{remark}[theorem]{Remark}

\newcommand{\F}{\mathbb{F}}

\newcommand{\BW}{\mathrm{BW}}
\newcommand{\IO}{\mathrm{IO}}
\newcommand{\PG}{\mathrm{PG}}

\newcommand{\avg}{\mathrm{avg}}
\newcommand{\rank}{\mathrm{rank}}
\newcommand{\nz}{\mathrm{nz}}

\newcommand{\col}{\mathrm{col}}
\newcommand{\Span}{\mathrm{span}}
\newcommand{\GL}{\mathrm{GL}}

\begin{document}

\begin{center}
  {\LARGE\bfseries
    Linear Exact Repair in MDS Array Codes:\\[0.3em]
    A General Lower Bound and Its Attainability\par}
\end{center}

\vspace{0.6em}
\begin{center}
  {\large
    Hai Liu\textsuperscript{2}
    \quad
    Huawei Wu\textsuperscript{1,*}
  }\\[0.55em]
  {\normalsize
    \textsuperscript{1}Shanghai Fintelli Box Technology Co., Ltd., Shanghai, 200127, China\\[0.35em]
    \textsuperscript{2}School of Software Engineering, East China Normal University, Shanghai, 200062, China\\[0.35em]
    \textsuperscript{*}Corresponding author.
    \href{mailto:wuhuawei1996@gmail.com}{\texttt{wuhuawei1996@gmail.com}}
  }
\end{center}

\begin{abstract}
 For an \((n,k,\ell)\) MDS array code over \(\F_q\), how small can the repair bandwidth and repair I/O be under linear exact repair? We study this question in the regime where the field size \(q\), the redundancy \(r=n-k\), and the sub-packetization level \(\ell\) are fixed, while the code length \(n\) varies, and we develop a geometric approach to this setting. Our starting point is an intrinsic reformulation of linear exact repair for MDS array codes in terms of subspace intersections and, for repair I/O, the projective point configurations induced by a parity-check realization.

This viewpoint yields a simple projective counting argument establishing the general lower bound
\[
\beta_{\avg},\beta_{\max},\gamma_{\avg},\gamma_{\max}
\;\ge\;
\ell(n-1)-\frac{q^{(r-1)\ell}-1}{q-1}
\]
for linear exact repair of every \((n,k,\ell)\) MDS array code over \(\F_q\) with redundancy \(r=n-k\ge 2\). To our knowledge, this is the first lower bound of this form that applies to arbitrary redundancy \(r\ge 2\) and sub-packetization level \(\ell\). At first glance, the projective counting bound appears rather coarse and therefore unlikely to be attained. We prove that this intuition is correct whenever \(r\ge 3\) and \(\ell\ge 2\).

For \(r=2\), the picture changes completely. Using Desarguesian spreads from finite geometry, we construct MDS array codes that attain the bound over a broad interval of code lengths, up to the maximum possible length \(q^\ell+1\), and do so simultaneously for both repair bandwidth and repair I/O. In the smallest nontrivial case \((r,\ell)=(2,2)\), we also prove a converse within the regular-spread model. 

Together, these results identify a uniform obstruction governing linear exact repair and show that, in the two-parity case, this obstruction is tight.
  \end{abstract}

%
\newpage
\tableofcontents
\newpage

\section{Introduction}

Erasure-coded storage systems must routinely recover from the temporary unavailability or failure of a single storage node. For an $(n,k)$ MDS-coded system, this leads to a basic algorithmic question: how efficiently can one repair a missing node while preserving the optimal redundancy--reliability trade-off of MDS coding? A central measure of repair efficiency is the \emph{repair bandwidth}, namely the total amount of information downloaded from helper nodes during repair. A key structural parameter governing repair is the \emph{sub-packetization} level $\ell$: larger sub-packetization enables finer-grained repair and can significantly reduce repair bandwidth~\cite{ramkumar2022codes}.

The regenerating-code framework identified repair bandwidth as a fundamental quantity and established the \emph{cut-set bound} for single-node repair. At the \emph{minimum-storage point} of this trade-off, one obtains \emph{minimum-storage regenerating (MSR) codes}, which preserve the MDS property and achieve the information-theoretically optimal repair bandwidth~\cite{dimakis2010network}. The difficulty is that this optimum is provably expensive: in the high-rate regime, exact-repair MSR codes require exponential, or at least very large, sub-packetization~\cite{balaji2018tight,alrabiah2019exponential,balaji2022lower}. In practice, such fine-grained partitioning can lead to fragmented, often non-contiguous, disk accesses~\cite{gan2025revisiting,shen2025survey}. This has motivated a broad line of work on \emph{MDS array codes} with small, or even constant, sub-packetization, which seek the best achievable repair bandwidth without insisting on the MSR point~\cite{rashmi2017piggybacking,kralevska2018hashtag,rawat2017msr,kralevska2018explicit,ramkumar2022codes,ramkumar2025varepsilon}. A central open direction is to understand the trade-off between repair bandwidth, sub-packetization, and field size for MDS array codes~\cite[Open Problem~9]{ramkumar2022codes}.

In this paper, we consider \emph{linear exact repair} for $(n,k,\ell)$ MDS array codes over the finite field $\F_q$ of size $q$, with redundancy $r=n-k$, in the regime where the field size $q$, the redundancy $r$, and the sub-packetization level $\ell$ are fixed while the code length $n$ varies. Besides repair bandwidth, we also study the \emph{repair I/O}, defined as the total amount of information accessed at helper nodes during repair. This quantity has received increasing attention in recent years~\cite{dau2018repair,li2019costs,liu2024formula,liu2025calculating}. Our goal is to understand the fundamental limitations of repair bandwidth and repair I/O in this regime, and in particular to prove general lower bounds on both quantities.

The closest prior work in this direction is due to Zhang, Li, and Hu~\cite{zhang2025optimal}, who analyzed the special case $(r,\ell)=(2,2)$ and obtained explicit lower bounds for repair bandwidth and repair I/O that are nearly sharp in certain field-size-dependent short-length regimes. Their results reveal a delicate small-parameter phenomenon, but do not explain the general obstruction governing repair complexity beyond $(r,\ell)=(2,2)$.

Our starting point is that the usual matrix description of linear exact repair is convenient for bookkeeping but tends to obscure the underlying geometry of the repair constraints. We show that, for repair bandwidth, linear exact repair admits an intrinsic reformulation in terms of intersections between the node subspaces and feasible repair subspaces. For repair I/O, the same viewpoint persists in a refined form: one must keep track not only of the node subspaces themselves, but also of the projective column points arising from a chosen parity-check realization.

This intrinsic subspace formulation naturally yields, via a simple counting argument in projective space, a general lower bound on repair bandwidth for linear exact repair in MDS array codes. To our knowledge, this is the first lower bound of this kind that applies to arbitrary redundancy $r\ge 2$ and sub-packetization level $\ell$. By the pointwise inequality $\IO\ge \BW$, the same quantity also gives a general lower bound on repair I/O.

At first glance, the bound appears rather coarse, and one might expect it to be rarely attained. We show that this intuition is correct once $r\ge 3$ and $\ell\ge 2$: in that regime, equality never occurs. Surprisingly, the two-parity case is fundamentally different. When $r=2$, constructions arising from \emph{Desarguesian spreads} in finite geometry attain equality over a broad interval of admissible code lengths, reaching all the way to the maximum possible length $q^\ell+1$; with suitable parity-check realizations, these constructions are simultaneously optimal for both repair bandwidth and repair I/O.

Taken together, these results provide not only general lower bounds on repair bandwidth and repair I/O for linear exact repair in MDS array codes, but also a new geometric viewpoint on these quantities. The projective counting bound provides a uniform constraint for arbitrary $r\ge 2$ and $\ell$, while in the two-parity case the same viewpoint yields a remarkably clean finite-geometric explanation of sharpness, with optimality witnessed by explicit extremal configurations. The two-parity case is also practically important, since it corresponds to a very low redundancy level and is therefore adopted in widely used double-erasure-tolerant storage systems (e.g., RAID-6 \cite{blaum1995evenodd}, Tencent Ultra-Cold Storage \cite{intel2017tencent}).

\subsection{Main Results}

We now state the main results of the paper. For an $(n,k,\ell)$ MDS array code $\mathcal{C}$ over $\F_q$, with redundancy $r:=n-k$, let
\[
  \beta_{\avg}(\mathcal{C}),\ \beta_{\max}(\mathcal{C}),\ \gamma_{\avg}(\mathcal{C}),\ \gamma_{\max}(\mathcal{C})
\]
denote, respectively, the average and worst-case repair bandwidth, and the average and worst-case repair I/O, under linear exact repair.

Our first theorem is a general lower bound, valid for arbitrary redundancy $r\ge 2$ and sub-packetization level $\ell$. For convenience, set
\[
  T_{r,\ell}(q):=\frac{q^{(r-1)\ell}-1}{q-1}.
\]

\begin{theorem}[Projective counting bound]\label{thm:projective-counting-bound}
  Let $\mathcal{C}$ be an $(n,k,\ell)$ MDS array code over $\F_q$ with redundancy $r=n-k\ge 2$. Then
  \[
    \beta_{\avg}(\mathcal{C}),\ \beta_{\max}(\mathcal{C}),\ \gamma_{\avg}(\mathcal{C}),\ \gamma_{\max}(\mathcal{C})
    \ \ge\ \ell(n-1)-T_{r,\ell}(q).
  \]
\end{theorem}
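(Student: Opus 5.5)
Since every repair I/O is at least the corresponding repair bandwidth ($\IO\ge\BW$ pointwise), and $\beta_{\max}\ge\beta_{\avg}$, $\gamma_{\max}\ge\gamma_{\avg}\ge\beta_{\avg}$, it suffices to prove that the repair bandwidth of \emph{every} single node $i$ is at least $\ell(n-1)-T_{r,\ell}(q)$. Averaging over $i$ then gives the bound for $\beta_{\avg}$, and the remaining three quantities follow from the inequalities above.

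I would use the standard dual description of linear exact repair. Take a parity-check realization $H=[H_1\mid\cdots\mid H_n]$ with $H_j\in\F_q^{r\ell\times\ell}$. A linear repair scheme for node $i$ is a matrix $S\in\F_q^{\ell\times r\ell}$ such that $SH_i$ has rank $\ell$. The bandwidth it uses is $\sum_{j\ne i}\rank(SH_j)$. Equivalently, let $W=\ker S$ (a subspace of $\F_q^{r\ell}$ of dimension $(r-1)\ell$) and let $V_j=\col(H_j)$ (of dimension $\ell$ by the MDS property). Then the condition on $S$ is $W\cap V_i=0$, and the bandwidth equals
\[
\sum_{j\ne i}\bigl(\ell-\dim(V_j\cap W)\bigr)=\ell(n-1)-\sum_{j\ne i}\dim(V_j\cap W).
\]
So the task reduces to showing $\sum_{j\ne i}\dim(V_j\cap W)\le T_{r,\ell}(q)$. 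This reformulation is where I expect the paper's intrinsic subspace formulation to be invoked.

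The key geometric input is that the subspaces $V_j\cap W$ for $j\ne i$ meet pairwise only in $0$. Indeed, the MDS property says any $r$ of the blocks $H_j$ together have full rank $r\ell$. Since $r\ge2$, any two of them have rank $2\ell$, so $V_j\cap V_{j'}=0$ for $j\ne j'$. Hence the projective subspaces $\PG(V_j\cap W)$ are pairwise disjoint inside $\PG(W)$, which is a projective space of dimension $(r-1)\ell-1$ containing exactly $T_{r,\ell}(q)$ points. Counting points gives
\[
\sum_{j\ne i}\frac{q^{d_j}-1}{q-1}\le T_{r,\ell}(q),
\]
where $d_j=\dim(V_j\cap W)$. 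Since $(q^{d}-1)/(q-1)\ge d$ for every integer $d\ge0$, this yields $\sum_j d_j\le T_{r,\ell}(q)$, which is the required bound.

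The main obstacle is making the reformulation rigorous. One has to check that an arbitrary linear exact repair scheme, with its downloads from each helper node and a linear decoding function, really does correspond to such an $S$. Equivalently, one must show that the downloads can be taken to be $SH_j$ applied to node $j$'s content, without loss in bandwidth. This is the classical Guruswami–Wootters-type argument via dual codewords. I would take it from the paper's earlier reformulation, or else derive it by noting that linear repair of node $i$ is equivalent to finding $\ell$ dual codewords whose restrictions to node $i$ are independent. For repair I/O no separate argument is needed, because $\gamma\ge\beta$ holds scheme by scheme.
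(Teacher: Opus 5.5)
Your proposal is correct and follows essentially the same route as the paper: pass to $W=\ker(M)$ with $\dim W=(r-1)\ell$, and use $\mathcal{H}_a\cap\mathcal{H}_b=\{0\}$ to pack the disjoint point sets $\mathbb{P}(W\cap\mathcal{H}_j)$ into $\mathbb{P}(W)$. Then apply $d\le (q^d-1)/(q-1)$, and transfer the per-node bound to averages, maxima, and repair I/O via $\gamma_i\ge\beta_i$. The ``main obstacle'' you flag does not arise here: the paper defines linear exact repair directly through repair matrices $M\in\F_q^{\ell\times r\ell}$ acting on the parity-check equations, so the kernel reformulation you give is already the whole reduction.
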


The next result shows that the projective counting bound is never attained once $r\ge 3$ and $\ell\ge 2$.

\begin{theorem}\label{thm:strictness}
  Assume that $r\ge 3$ and $\ell\ge 2$. Then for every $(n,k,\ell)$ MDS array code $\mathcal{C}$ over $\F_q$,
  \[
    \beta_{\avg}(\mathcal{C}),\ \beta_{\max}(\mathcal{C}),\ \gamma_{\avg}(\mathcal{C}),\ \gamma_{\max}(\mathcal{C})
    \ >\ \ell(n-1)-T_{r,\ell}(q).
  \]
\end{theorem}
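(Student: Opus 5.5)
The plan is to reprove the projective counting bound of \cref{thm:projective-counting-bound} node by node, keeping track of every inequality used, and then to show that for $r\ge 3$ and $\ell\ge 2$ one of those inequalities cannot hold with equality. Since $\IO\ge\BW$ pointwise, and the average and the maximum of per-node quantities that all strictly exceed a constant also strictly exceed it, it suffices to show the following: for every node $i$ and every linear exact repair scheme for node $i$, the repair bandwidth strictly exceeds $\ell(n-1)-T_{r,\ell}(q)$.

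Fix node $i$. In the intrinsic formulation, a linear repair scheme is an $\ell$-dimensional subspace $S\subseteq\mathcal{C}^\perp$ whose projection onto block $i$ is bijective onto $\F_q^\ell$. For $j\neq i$, let $K_j=\{c\in\mathcal{C}^\perp : c_j=0\}$. Because $\mathcal{C}^\perp$ is an $(n,r,\ell)$ MDS array code, $\dim K_j=(r-1)\ell$, and every nonzero dual codeword has at most $r-1$ zero blocks. The number of symbols downloaded from node $j$ is the rank of the $j$-th block of $S$, which equals $\ell-d_j$ with $d_j:=\dim(S\cap K_j)$. Hence
\[
\BW_i=\ell(n-1)-\sum_{j\neq i}d_j .
\]

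Next I bound the sum by counting incidences between points of $\PG(S)$ and the subspaces $K_j$. First, $d_j\le (q^{d_j}-1)/(q-1)$, with equality iff $d_j\in\{0,1\}$. Second, $\sum_{j\ne i}(q^{d_j}-1)/(q-1)$ counts the pairs (projective point $p$ of $S$, index $j$) with $p\in K_j$. Each point of $S$ lies in at most $r-1$ of the $K_j$, and $S$ has $(q^\ell-1)/(q-1)$ points. Therefore
\[
\sum_{j\neq i}d_j\;\le\;(r-1)\frac{q^\ell-1}{q-1}\;\le\;\frac{q^{(r-1)\ell}-1}{q-1}=T_{r,\ell}(q).
\]
The last inequality holds because $q^{(r-1)\ell}-1=(q^\ell-1)\sum_{t=0}^{r-2}q^{t\ell}$ and each summand $q^{t\ell}$ is at least $1$. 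This recovers \cref{thm:projective-counting-bound}, and it is presumably the paper's counting argument.

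For strictness, note that when $r\ge 3$ and $\ell\ge 2$ the summand $q^{\ell}$ (the term $t=1$) exceeds $1$. So the last inequality is strict, and $\BW_i>\ell(n-1)-T_{r,\ell}(q)$ for every node $i$ and every scheme. The main thing to check carefully is the MDS fact that a nonzero dual codeword vanishes on at most $r-1$ blocks. This holds because $\mathcal{C}^\perp$ has $\F_q$-dimension $r\ell$, so a nonzero codeword vanishing on $r$ blocks would contradict the injectivity of projection onto any $r$ blocks. Given that fact, the strict per-node inequality passes to $\beta_{\avg},\beta_{\max}$ directly, and to $\gamma_{\avg},\gamma_{\max}$ via $\IO\ge\BW$.
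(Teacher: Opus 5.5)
Your proof is correct, but it is not the paper's counting argument, and it proves more than the statement asks.

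\textbf{The paper's route.} The paper counts in $\mathbb{P}(W)$, where $W=\ker M$ has dimension $(r-1)\ell$, using only that the sets $\mathbb{P}(W\cap\mathcal{H}_j)$ are pairwise disjoint. This is where $T_{r,\ell}(q)=|\mathbb{P}(W)|$ comes from. To get strictness it then argues indirectly:
\begin{enumerate}[label=(\roman*)]
  \item Remark~\ref{rem:necessary-length-for-equality} shows that equality forces every intersection to have dimension at most one, and hence $n\ge 1+T_{r,\ell}(q)$.
  \item This is played off against the length bound $n\le q^\ell+r-1$ of Lemma~\ref{lem:length-bound}.
  \item The final comparison $T_{r,\ell}(q)>q^\ell+r-2$ is where $\ell\ge 2$ enters. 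For $\ell=1$, $r=3$ the two length constraints are compatible, since both equal $q+2$.
\end{enumerate}

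\textbf{Your route.} You count in the dual object $S$ (equivalently $W^\perp$). It has only $\ell$ dimensions, and each of its points lies in at most $r-1$ of the $K_j$ (equivalently the $\mathcal{H}_j^\perp$). This gives directly
\[
  \alpha_i\le (r-1)\frac{q^\ell-1}{q-1}.
\]
So your count does not merely recover Theorem~\ref{thm:projective-counting-bound}. The two counts coincide for $r=2$, but for $r\ge 3$ yours falls below $T_{r,\ell}(q)$ by
\[
  \frac{q^\ell-1}{q-1}\sum_{s=1}^{r-2}\bigl(q^{s\ell}-1\bigr)>0.
\]
Strictness is then immediate. It needs neither Lemma~\ref{lem:length-bound} nor the hypothesis $\ell\ge 2$.

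\textbf{What each approach buys.} The paper's route yields a structural description of the equality case (all intersections of dimension at most one, exactly $T_{r,\ell}(q)$ hits). That description is reused in the two-parity converse, Proposition~\ref{prop:regular-spread-necessary-length}. Your route yields the per-node bound
\[
  \beta_i(\mathcal{C})\ge \ell(n-1)-(r-1)\frac{q^\ell-1}{q-1}\qquad (r\ge 2),
\]
which strictly sharpens Theorem~\ref{thm:projective-counting-bound} for $r\ge 3$. Theorem~\ref{thm:strictness} is an immediate corollary of it.

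\textbf{A minor point.} The fact that a nonzero dual codeword $m^{\top}H$ has at most $r-1$ zero blocks is seen most directly as follows: vanishing on a set $J$ of $r$ blocks means $m^{\top}H_J=0$, and invertibility of $H_J$ then forces $m=0$. Your phrasing via ``injectivity of projection onto any $r$ blocks'' restates the claim rather than deriving it, although it rests on exactly this fact.
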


In contrast, when $r=2$, the projective counting bound is attained over a broad interval of code lengths by constructions arising from Desarguesian spreads in finite geometry.

\begin{theorem}\label{thm:two-parity-attainability-general}
  Assume that $q\ge 3$ and $\ell\ge 2$, and let
  \[
    t_\ell(q):=T_{2,\ell}(q)=\frac{q^\ell-1}{q-1}.
  \]
  Then for every integer $n$ satisfying
  \[
    \min\{2t_\ell(q),\,3t_\ell(q)-6\}\le n\le q^\ell+1,
  \]
  there exists an $(n,n-2,\ell)$ MDS array code $\mathcal{C}$ over $\F_q$ such that
  \[
    \beta_{\avg}(\mathcal{C})=\beta_{\max}(\mathcal{C})
    =\gamma_{\avg}(\mathcal{C})=\gamma_{\max}(\mathcal{C})
    =\ell(n-1)-t_\ell(q).
  \]
  In particular, $\mathcal{C}$ attains the projective counting bound simultaneously for both repair bandwidth and repair I/O.
\end{theorem}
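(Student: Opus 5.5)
The plan is to reduce to partial spreads via the subspace reformulation of linear exact repair. Then realize the optimal repair subspaces inside a Desarguesian spread, and finally choose a parity-check realization so that I/O equals bandwidth.

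\textbf{Geometric reduction.} For $r=2$, write a parity-check matrix $H=[H_1\mid\cdots\mid H_n]$ with blocks $H_j\in\F_q^{2\ell\times\ell}$, and put $V_j=\col(H_j)\subseteq\F_q^{2\ell}$. The MDS property says exactly that $V_i\oplus V_j=\F_q^{2\ell}$ for all $i\ne j$, so $\{V_j\}$ is a partial spread and $n\le q^\ell+1$.

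In the subspace reformulation, repairing node $i$ amounts to choosing an $\ell$-dimensional $W$ with $W\cap V_i=0$. The resulting bandwidth is $\ell(n-1)-\sum_{j\ne i}\dim(W\cap V_j)$. The projective counting bound comes from the fact that the sets $\PG(W\cap V_j)$ are disjoint inside $\PG(W)$, which has $t_\ell(q)$ points. So equality for node $i$ holds exactly when $W$ is \emph{scattered} with respect to the $V_j$, meaning every $\dim(W\cap V_j)\le 1$, and every point of $\PG(W)$ lies in some $V_j$ with $j\neq i$.

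\textbf{Desarguesian model.} Identify $\F_q^{2\ell}=\F_{q^\ell}^2$ and let the spread consist of the $\F_{q^\ell}$-lines $V_y=\langle(1,y)\rangle$ for $y\in\F_{q^\ell}$, together with $V_\infty=\langle(0,1)\rangle$. Take
\[
W_0=\{(x,x^q):x\in\F_{q^\ell}\}.
\]
The vector $(x,x^q)$ lies in $V_{x^{q-1}}$. Moreover, $x^{q-1}=x'^{q-1}$ iff $x'/x\in\F_q^*$. Hence $W_0$ meets each $V_y$ with $y\in N:=\{z:z^{(q^\ell-1)/(q-1)}=1\}$ in exactly one point, and meets no other spread element. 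Here $N$ is the norm-one subgroup, of order $t_\ell(q)$.

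Applying $g\in\GL_2(\F_{q^\ell})$ moves $W_0$ to $gW_0$, which is scattered with respect to the spread and whose touched set is $g\cdot N$ (acting by Möbius transformations on $\PG(1,q^\ell)$). Thus node $i$ is optimally repairable iff some Möbius image of $N$ is contained in $S\setminus\{i\}$, where $S\subseteq\PG(1,q^\ell)$ is the set of chosen spread elements.

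\textbf{Choosing $S$.} Since $q\ge 3$, the index of $N$ in $\F_{q^\ell}^*$ is $q-1\ge 2$, so there are disjoint cosets $N$ and $cN$. Take $S\supseteq N\cup cN$. A node in $N$ is repaired using $cN$, and a node in $cN$ (or any extra node outside both) is repaired using $N$.

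Every additional spread element can then be added freely, since $gW_0$ meets it trivially. This gives all $n$ with $2t_\ell(q)\le n\le q^\ell+1$.

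For the few small cases with $3t_\ell(q)-6<2t_\ell(q)$, i.e.\ $t_\ell(q)<6$, I would instead use three Möbius images $N_1,N_2,N_3$ of $N$ that pairwise share exactly the right points. These are chosen so that every point of the union avoids at least one $N_a$, and the union has size $3t_\ell(q)-6$. Finding them is a small explicit check, for instance with maps $y\mapsto ay+b$.

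Averages equal maxima because every node achieves the bound.

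\textbf{I/O (main obstacle).} Since $\IO\ge\BW$ pointwise and \cref{thm:projective-counting-bound} gives the lower bound, it suffices to make the accessed symbols equal the downloaded ones. In the I/O refinement, this means that for each repair and each helper $j$, the point $W\cap V_j$ must be a projective column point of $H_j$, i.e.\ a chosen basis vector of $V_j$ up to scalar. Then that helper accesses only $\ell-1$ symbols.

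Node $j$ may be hit by several repair subspaces, at different points. So I must choose the repair subspaces $g_iW_0$ so that, for each $j$, the set of points $\{g_iW_0\cap V_j\}_i$ spans at most $\ell$ independent directions that extend to a basis of $V_j$.

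My first attempt is to use one repair subspace per coset. All nodes outside $cN$ are repaired with the same $W=cW_0$, and all nodes in $cN$ with $W_0$ (similarly in the three-image case). Then each $V_j$ is met by at most a couple of distinct subspaces. Their intersection points are distinct projective points of $V_j$ and hence independent when there are two. Since $\ell\ge 2$, they extend to a basis of $V_j$, which I take as the columns of $H_j$. This keeps the number of distinct points per node at most $\ell$, which is the crux I would verify carefully.
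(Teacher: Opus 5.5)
Your main-range argument ($2t_\ell(q)\le n\le q^\ell+1$) is the paper's proof. It uses the Desarguesian spread and repair subspaces $\{(x,bx^q)\}$ meeting exactly the spread elements indexed by the norm coset $bN$, one point each. It takes two disjoint cosets (this is where $q\ge 3$ enters), adjoins further spread elements freely, and chooses columns so that each node carries at most one distinguished point. Two slips need fixing. First, the subspace touching $cN$ is $\{(x,cx^q)\}=\mathrm{diag}(1,c)\,W_0$, not the $\F_{q^\ell}$-scalar multiple $c\cdot W_0$. The scalar multiple fixes every spread element and still touches $N$, so your I/O paragraph, read literally, repairs nodes of $N$ with a subspace that meets them. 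Second, only the ``if'' half of your M\"obius criterion is used; the ``only if'' half is unjustified for $\ell\ge 3$ and should be dropped.

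The genuine gap is the range $t_\ell(q)<6$, i.e.\ $(q,\ell,n)\in\{(3,2,6),(3,2,7),(4,2,9)\}$. There everything rests on three M\"obius images $N_1,N_2,N_3$ of $N$ with $|N_a\cap N_b|=2$ for $a\ne b$ and $N_1\cap N_2\cap N_3=\emptyset$. You assert these exist (``a small explicit check'') but never exhibit them, whereas the paper spends Appendix~\ref{app:exceptional-cases} producing such triples.

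The check is not automatic. For $q=4$ no triple of translates works. Normalize the triple to $N,N+b,N+b'$ with $N\cap(N+b)=\{x,x'\}$ and $N\cap(N+b')=\{y,y'\}$ disjoint. Since the elements of $N$ sum to $0$, $b+b'$ is the fifth element $z$ of $N$. But $N\cap(N+z)=\emptyset$: otherwise $1+v\in N$ for some $v\in N\setminus\{1\}$, which in characteristic $2$ gives $v+v^{-1}=1$, i.e.\ $v^2+v+1=0$, incompatible with $v^5=1\ne v$.

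Witnesses do exist in your affine model, realized by the subspaces $\{(x,ax^q+bx)\}$, which touch $aN+b$:
\begin{itemize}
\item For $q=3$, with $\omega^2=-1$ in $\F_9$, take $N=\{\pm1,\pm\omega\}$ and $N\pm(1+\omega)$. The pairwise intersections are $\{1,\omega\}$, $\{-1,-\omega\}$ and $\{\pm(1-\omega)\}$.
\item For $q=4$, with $\alpha^4=\alpha+1$ in $\F_{16}$, take $N=\{1,\alpha^3,\alpha^6,\alpha^9,\alpha^{12}\}$, $N+\alpha^{14}$ and $\alpha N+1$. The pairwise intersections are $\{1,\alpha^3\}$, $\{\alpha^6,\alpha^9\}$ and $\{\alpha^4,\alpha^5\}$.
\end{itemize}
You should also state that $n=7$ for $q=3$ comes from adjoining one further spread element, repaired by any of the three subspaces. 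With these in hand, each node lies in at most two of the $N_a$, so your two-point column argument for $\ell=2$ completes the proof.
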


Finally, in the smallest nontrivial case $(r,\ell)=(2,2)$, we prove a converse to the preceding two-parity construction theorem within the \emph{regular-spread} model.

\begin{theorem}\label{thm:two-parity-attainability-regular-spread}
  Assume that $q\ge 3$ and $r=\ell=2$. Fix a regular spread $\mathcal{S}$ of $\PG(3,q)$. Then the following are equivalent:
  \begin{enumerate}[label=(\arabic*)]
    \item $\min\{2q+2,\,3q-3\}\le n\le q^2+1$.
    \item There exists an $(n,n-2,2)$ MDS array code $\mathcal{C}$ over $\F_q$ whose induced node lines all lie in $\mathcal{S}$, and such that at least one (and hence all) of
          \[
            \beta_{\avg}(\mathcal{C}),\ \beta_{\max}(\mathcal{C}),\ \gamma_{\avg}(\mathcal{C}),\ \gamma_{\max}(\mathcal{C})
          \]
          attains the projective counting bound.
  \end{enumerate}
\end{theorem}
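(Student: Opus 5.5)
The plan is to translate everything into the geometry of $\PG(3,q)$ using the intrinsic reformulation of linear exact repair developed earlier in the paper, and then to reduce the statement to a purely combinatorial question about circles in the Miquelian inversive plane of order $q$.

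\textbf{Reduction to transversal lines.} For $r=\ell=2$, an $(n,n-2,2)$ MDS array code corresponds to $n$ node lines $L_1,\dots,L_n$ in $\PG(3,q)$ that are pairwise disjoint; this is the MDS condition. Conversely, any $n$ lines of a spread $\mathcal{S}$ realize such a code. Repairing node $i$ with a feasible repair subspace costs $\ell(n-1)$ minus a saving. In the projective counting argument behind \cref{thm:projective-counting-bound}, this saving is at most the number of points, $t_2(q)=q+1$, of a line $W$ disjoint from $L_i$. The reason is that the helpers $L_j$ are pairwise disjoint, so each point of $W$ is charged to at most one helper. Moreover, a helper meeting $W$ in a whole line would contribute saving $2<q+1$, which is worse.

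Tracing the equality case node by node should therefore give the following. Node $i$ attains the bound exactly when there is a line $W\neq L_j$ for all $j$, disjoint from $L_i$, such that every point of $W$ lies on some helper $L_j$ with $j\ne i$. Since every node costs at least the bound, $\beta_{\avg}$ attains it iff $\beta_{\max}$ does iff every node does. For the I/O quantities, $\gamma\ge\beta\ge$ bound gives one direction. For the other, I would invoke the paper's I/O refinement, in which one chooses a parity-check realization whose projective column points are the points $W\cap L_j$. This is exactly how the equivalence of ``one and hence all'' is obtained.

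\textbf{Passage to the inversive plane.} Because $\mathcal{S}$ is regular, a line $W\notin\mathcal{S}$ meets exactly $q+1$ spread lines, and these form a regulus contained in $\mathcal{S}$. Conversely, every regulus of $\mathcal{S}$ has transversals, which are lines disjoint from every spread line outside that regulus. Identify $\mathcal{S}$ with the $q^2+1$ points of the Miquelian inversive plane, so that reguli in $\mathcal{S}$ become circles of size $q+1$. Condition (2) then becomes the following condition on the point set $N$ of node lines, with $|N|=n$:
\begin{center}
for every $P\in N$ there is a circle $C\subseteq N\setminus\{P\}$.
\end{center}

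\textbf{The combinatorial problem.} It remains to show that such an $N$ exists iff $\min\{2q+2,3q-3\}\le n\le q^2+1$.

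The upper bound holds because $|\mathcal{S}|=q^2+1$. The condition is monotone under adding points, so for existence it suffices to exhibit configurations of the two minimal sizes:
\begin{itemize}
\item Two disjoint circles give $2q+2$ points.
\item Three circles meeting pairwise in two points, with the three intersection pairs distinct and no point on all three circles, give $3(q+1)-6=3q-3$ points. Each intersection point of two of the circles is avoided by the third.
\end{itemize}

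For the lower bound, $N$ contains some circle $C_1$. A point of $C_1$ forces a second circle $C_2\subseteq N$ avoiding it.
\begin{itemize}
\item If $C_1\cap C_2=\emptyset$, then $n\ge 2q+2$.
\item Otherwise $|C_1\cap C_2|\in\{1,2\}$, and a point of $C_1\cap C_2$ forces a third circle $C_3\subseteq N$ avoiding it. Two circles meet in at most two points, so $|C_3\cap(C_1\cup C_2)|\le 4$. Hence
\[
n\ge 2(q+1)-2+(q+1)-4=3q-3.
\]
\end{itemize}

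\textbf{Main obstacle.} I expect the hardest step to be the existence of the three-circle configuration for every $q\ge3$. One needs a circle $C_3$ meeting $C_1$ and $C_2$ each in exactly two points while avoiding $C_1\cap C_2$. I would argue this by counting circles through one chosen point of $C_1\setminus C_2$ and one of $C_2\setminus C_1$, and subtracting those tangent to $C_1$ or $C_2$ or passing through $C_1\cap C_2$, using the standard inversive-plane counts. A secondary point needing care is checking that the equality analysis in the reduction step genuinely forces $W\notin\mathcal{S}$ to be a regulus transversal, with no degenerate alternatives.
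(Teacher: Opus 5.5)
Your reduction to the condition that every point of $N$ is avoided by some circle contained in $N$ is sound. Your lower bound (take $C_1$, then $C_2$ avoiding a point of $C_1$, then $C_3$ avoiding a point of $C_1\cap C_2$) is correct, and shorter than the paper's route through the minimal-subfamily argument of \cref{lem:block-intersection-principle}.

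The gap is in the existence direction, at the step you flagged. The three-circle configuration is needed only when $3q-3<2q+2$, i.e.\ for $q\in\{3,4\}$. Write $C_1\cap C_2=\{A,B\}$. Your count gives $q+1$ circles through $P\in C_1\setminus C_2$ and $Q\in C_2\setminus C_1$. At most four of these are bad: tangent to $C_1$ at $P$, tangent to $C_2$ at $Q$, through $A$, or through $B$. That leaves at least $q-3$ good circles. This settles $q=4$, but gives nothing for $q=3$, which is exactly where you need $n\in\{6,7\}$.

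This is not just a weak estimate. For $q=3$ each $C_i\setminus\{A,B\}$ has two points, so $C_3$ is forced to be $(C_1\cup C_2)\setminus\{A,B\}$. Whether this $4$-set is a circle depends on the pair $(C_1,C_2)$. In the model $\mathbb{P}^1(\F_9)$ with $A=\infty$, $B=0$, write $C_i=\{\infty,0,\pm\lambda_i\}$. A circle avoiding $\infty$ has the form $\{z:N_{\F_9/\F_3}(z-c)=\rho\}$ with $\rho\in\F_3^\times$. Containing both $\lambda_i$ and $-\lambda_i$ forces $\mathrm{Tr}_{\F_9/\F_3}(\lambda_i c^3)=0$ for $i=1,2$. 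Since $\lambda_1,\lambda_2$ span $\F_9$ over $\F_3$, this gives $c=0$, hence $N_{\F_9/\F_3}(\lambda_1)=N_{\F_9/\F_3}(\lambda_2)$. With $\omega^2=-1$, the choice $\lambda_1=1$, $\lambda_2=1+\omega$ has norms $1\neq 2$, so for this pair no admissible $C_3$ exists. You must therefore do one of two things:
\begin{itemize}
\item choose $C_2$ with $N_{\F_9/\F_3}(\lambda_2)=N_{\F_9/\F_3}(\lambda_1)$, after which $C_3=\{z:N_{\F_9/\F_3}(z)=N_{\F_9/\F_3}(\lambda_1)\}$ works; or
\item exhibit the configuration directly, as the paper does in \cref{prop:exceptional-q3}.
\end{itemize}

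Two smaller points. First, you should justify that two disjoint circles exist; this uses $q\ge 3$. In the paper these are the hit sets of $W_{b_1},W_{b_2}$, namely the norm cosets $b_1\Sigma,b_2\Sigma$ with distinct norms.

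Second, ``at least one (and hence all)'' cannot be read as a statement about a single code. $\beta$-optimality of a given code does not imply $\gamma$-optimality, because the optimal repair lines may meet a node line in more than $\ell=2$ points. Changing the column points to fix this changes the code. What your argument (like the paper's) actually establishes is the chain: (2) for one quantity implies (1), which implies (2) for all four. For the last step you must fix a single transversal per circle of your configuration. You then use that no node lies on three of these circles, so each node line is hit in at most two points and the column sets $X_j$ can contain all of them.
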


Theorems~\ref{thm:projective-counting-bound} and~\ref{thm:strictness} are proved in Section~\ref{sec:projective-counting-bound}, while Theorems~\ref{thm:two-parity-attainability-general} and~\ref{thm:two-parity-attainability-regular-spread} are proved in Section~\ref{sec:two-parity}.

\subsection{Prior and Related Work}

We briefly discuss prior work most closely related to the present paper. A natural starting point is~\cite[Open Problem~9]{ramkumar2022codes}, which formulates the general trade-off question between repair bandwidth, sub-packetization, and field size for vector MDS codes.

A related line of work studies fine-grained repair for scalar MDS codes, especially Reed--Solomon codes over an extension field repaired over a subfield. Guruswami and Wootters~\cite{guruswami2016repairing} initiated the study of repair bandwidth in this setting, and Dau and Milenkovic~\cite{dau2017optimal} sharpened the resulting bandwidth bounds for Reed--Solomon codes. More recently, repair I/O in the same scalar-MDS framework has also been studied:~\cite{dau2018repair} gave the first nontrivial repair-I/O lower bound for full-length Reed--Solomon codes with two parity nodes, and subsequent works~\cite{li2019costs,liu2024formula,liu2025calculating} extended and refined these bounds. These works are closely related in spirit, since they also seek lower bounds on fine-grained repair complexity, but they concern a different model: the code remains scalar over a large field, whereas the present paper studies genuine MDS array codes with fixed redundancy and fixed sub-packetization.

The closest prior work to ours is the recent paper of Zhang, Li, and Hu~\cite{zhang2025optimal}, which studies the special case $(r,\ell)=(2,2)$. By a delicate combinatorial analysis, they derive explicit lower bounds for repair bandwidth and repair I/O that depend only on the code length $n$. They further show that these bounds are nearly sharp in certain short-length regimes by constructing matching codes for lengths below $q$. This yields a precise understanding of the smallest parameter regime, but the analysis is inherently tied to $(r,\ell)=(2,2)$ and does not identify the obstruction governing repair complexity for arbitrary $r$ and $\ell$. Moreover, as the code length grows, their bounds depending only on $n$ no longer capture the governing constraint, which in our setting turns out to depend essentially on the size of the underlying field.

Another recent direction studies the same $(r,\ell)=(2,2)$ regime under the additional degraded-read-friendly (DRF) restriction. An earlier work~\cite{wu2021achievable} derives a lower bound on the optimal access bandwidth for DRF MDS array codes in this setting, and also gives a matching construction. More recently, Li and Tang~\cite{li2025efficient} develop explicit DRF constructions, highlighting in particular two families with $n=4m$ and $n=3m$, and emphasize improved repair bandwidth or rebuilding access relative to previously known constructions. In particular, one of their constructions is asymptotically optimal with respect to the average-case repair-bandwidth lower bound of Zhang, Li, and Hu~\cite{zhang2025optimal}.

\section{Linear Exact Repair and Its Intrinsic Subspace Formulation}\label{sec:intrinsic}

\subsection{MDS Array Codes and Linear Exact Repair}\label{subsection:mds}

We begin with the standard matrix description of an MDS array code and of linear exact repair.
Throughout the paper, for a positive integer $m$, we write $[m]:=\{1,2,\dots,m\}$.

Fix integers $n,k,\ell\in\mathbb{N}_+$ with $1\le k<n$, and write $r:=n-k$. Here $n$ is the code length, $k$ is the dimension parameter, $\ell$ is the sub-packetization level,
and $r$ is the redundancy.

An \emph{$(n,k,\ell)$ linear array code} over $\F_q$ is an $\F_q$-linear subspace
\[
  \mathcal{C}\le (\F_q^\ell)^n
\]
of dimension $k\ell$ over $\F_q$.
Its elements are written as
\[
  C=(C_1,\dots,C_n), \qquad C_i\in \F_q^\ell.
\]
We refer to the vectors $C_1,\dots,C_n$ as the \emph{blocks} of the codeword $C$.
When interpreted in the distributed-storage setting, block $C_i$ is stored in node $i$.
Thus each node stores $\ell$ subsymbols over $\F_q$.

A convenient description of $\mathcal{C}$ is via a block parity-check matrix
\[
  H=[\,H_1\ H_2\ \cdots\ H_n\,]\in \F_q^{r\ell\times n\ell},
  \qquad H_i\in \F_q^{r\ell\times \ell},
\]
of full row rank $r\ell$, so that
\[
  \mathcal{C}
  =
  \ker(H)
  =
  \Bigl\{
    (C_1,\dots,C_n)\in (\F_q^\ell)^n:
    \sum_{i=1}^n H_iC_i=0
  \Bigr\}.
\]
We say that $\mathcal{C}$ is \emph{MDS} if for every subset $I\subseteq [n]$ with $|I|=r$, the
square block matrix
\[
  H_I:=[\,H_i\,]_{i\in I}\in \F_q^{r\ell\times r\ell}
\]
is invertible. In other words, any $k=n-r$ blocks determine the entire codeword. In the distributed-storage setting, this means that any $r$ erased nodes can be recovered from the
remaining $k$ nodes.

From this point on, we always assume that $\mathcal{C}$ is MDS. In particular, each block
$H_i$ has full column rank $\ell$: indeed, for any $i\in[n]$, the block $H_i$ appears inside
some invertible matrix $H_I$ with $|I|=r$, and hence its $\ell$ columns must be linearly independent.

We then recall the standard notion of \emph{linear exact repair}. Suppose that node $i\in[n]$ fails, so that the block $C_i$ is missing. A linear repair scheme for node $i$ is specified by a matrix $M\in \F_q^{\ell\times r\ell}$, which produces $\ell$ new linear combinations of the parity-check equations:
\[
  \sum_{j=1}^n (MH_j)C_j=0.
\]
If the matrix $MH_i\in\F_q^{\ell\times \ell}$ is invertible, then these $\ell$ equations determine
the unknown block $C_i$ uniquely, since we may rearrange them as
\[
  C_i
  =
  -(MH_i)^{-1}\sum_{j\ne i}(MH_j)C_j.
\]
Thus node $i$ can be recovered provided that, for each helper node $j\ne i$, we know the vector
\[
  (MH_j)C_j.
\]
In this sense, helper node $j$ contributes the linear image $(MH_j)C_j$ of its stored block,
and the failed block $C_i$ is reconstructed by combining these contributions from all helper nodes.
Accordingly, whenever $MH_i$ is invertible, we say that $M$ \emph{repairs node $i$}.

This gives rise to two basic repair-cost measures.

\smallskip
\noindent\textbf{Repair bandwidth.} For a fixed failed node $i$ and a repair matrix $M$ with $MH_i$ invertible, define
\[
  \BW_i(M):=\sum_{j\ne i}\rank(MH_j).
\]
This is the total number of $\F_q$-symbols downloaded from the helper nodes during the repair of
node $i$.

\smallskip
\noindent\textbf{Repair I/O.} For a matrix $A$, let $\nz(A)$ denote the number of nonzero columns of $A$.
Since $(MH_j)C_j$ depends only on those coordinates of $C_j$ corresponding to nonzero columns of
$MH_j$, we define
\[
  \IO_i(M):=\sum_{j\ne i}\nz(MH_j).
\]
Thus $\IO_i(M)$ measures the total number of subsymbols accessed at the helper nodes during repair. Clearly, one has $\IO_i(M)\ge \BW_i(M)$, since $\nz(A)\ge \rank(A)$ for every matrix $A$.

For each node $i\in[n]$, let
\[
  \mathcal{M}_i
  :=
  \{\,M\in \F_q^{\ell\times r\ell}: MH_i \text{ is invertible}\,\}
\]
be the set of linear repair matrices that can repair node $i$.
We then define the optimal per-node repair bandwidth and repair I/O by
\[
  \beta_i(\mathcal{C})
  :=
  \min_{M\in \mathcal{M}_i}\BW_i(M),
  \qquad
  \gamma_i(\mathcal{C})
  :=
  \min_{M\in \mathcal{M}_i}\IO_i(M).
\]

Aggregating over all failed nodes, we obtain the average and worst-case parameters
\[
  \beta_{\avg}(\mathcal{C})
  :=
  \frac{1}{n}\sum_{i=1}^n \beta_i(\mathcal{C}),
  \qquad
  \beta_{\max}(\mathcal{C})
  :=
  \max_{i\in[n]}\beta_i(\mathcal{C}),
\]
and
\[
  \gamma_{\avg}(\mathcal{C})
  :=
  \frac{1}{n}\sum_{i=1}^n \gamma_i(\mathcal{C}),
  \qquad
  \gamma_{\max}(\mathcal{C})
  :=
  \max_{i\in[n]}\gamma_i(\mathcal{C}).
\]

The matrix formulation above is standard and convenient for bookkeeping. However, for our purposes it hides the geometric content of the repair constraints. In the next subsection we recast this repair problem in an intrinsic subspace language, which is the framework used throughout the rest of the paper.

\subsection{An Intrinsic Subspace Formulation}

The matrix model from Subsection~\ref{subsection:mds} admits a reversible reformulation in terms of subspaces of
$\mathbb{V}:=\F_q^{r\ell}$, together with, for repair I/O, distinguished projective point sets inside those subspaces. This is the framework used throughout the rest of the paper.

Write each parity-check block as $H_i=[\,h_{i,1}\ \cdots\ h_{i,\ell}\,]$ with $h_{i,t}\in\mathbb{V}$,
and define
\[
  \mathcal{H}_i:=\col(H_i)=\Span_{\F_q}\{h_{i,1},\dots,h_{i,\ell}\}\le \mathbb{V}.
\]
We will informally refer to \(\mathcal{H}_i\) as the \emph{node subspace} associated with node \(i\).

Since $\mathcal{C}$ is MDS, each block $H_i$ has rank $\ell$, so $\dim(\mathcal{H}_i)=\ell$.
Moreover, the MDS condition is equivalent to requiring that, for every $J\subseteq[n]$ with $|J|=r$,
one has $\sum_{j\in J}\mathcal{H}_j=\mathbb{V}$; since each $\mathcal{H}_j$ has dimension $\ell$ and
$\dim(\mathbb{V})=r\ell$, this is equivalent to the sum being direct. Conversely, given
$\ell$-dimensional subspaces $\mathcal{H}_1,\dots,\mathcal{H}_n\le\mathbb{V}$ satisfying this
$r$-wise spanning condition, one may choose full-column-rank matrices $H_i$ with
$\col(H_i)=\mathcal{H}_i$; then the block matrix $H=[\,H_1\ \cdots\ H_n\,]$ defines an
$(n,k,\ell)$ MDS array code over $\F_q$.

Now fix a failed node $i\in[n]$, and let $M\in\F_q^{\ell\times r\ell}$ be a repair matrix for node $i$.
The intrinsic object associated with $M$ is its kernel $W:=\ker(M)\le\mathbb{V}$. Since $MH_i$ is
invertible, the matrix $M$ has rank $\ell$, and hence $\dim(W)=r\ell-\ell$. Moreover, $MH_i$ is
invertible if and only if the restriction $M|_{\mathcal{H}_i}:\mathcal{H}_i\to\F_q^\ell$ is injective;
since $\dim(\mathcal{H}_i)=\ell$, this is equivalent to $\ker(M)\cap\mathcal{H}_i=\{0\}$, that is, to
$W\cap\mathcal{H}_i=\{0\}$. Conversely, every subspace $W\le\mathbb{V}$ with $\dim(W)=r\ell-\ell$ and
$W\cap\mathcal{H}_i=\{0\}$ is the kernel of some matrix $M\in \F_q^{\ell\times r\ell}$ of rank $\ell$, and any such $M$ repairs node $i$. This motivates the feasible family
\[
  \mathcal{W}_i:=\{\,W\le\mathbb{V}:\dim(W)=r\ell-\ell,\ W\cap\mathcal{H}_i=\{0\}\,\}.
\]
We will informally refer to elements of \(\mathcal{W}_i\) as \emph{repair subspaces} for node \(i\).

For such a repair matrix $M$, with $W=\ker(M)$, and for each $j\ne i$, one has
\[
  \rank(MH_j)=\ell-\dim(W\cap\mathcal{H}_j),
\]
and hence
\[
  \BW_i(M)=\sum_{j\ne i}\rank(MH_j)
  =\ell(n-1)-\sum_{j\ne i}\dim(W\cap\mathcal{H}_j).
\]
Thus minimizing repair bandwidth is equivalent to maximizing the total intersection dimension with the
helper node subspaces. We therefore define
\[
  \alpha_i:=\max_{W\in\mathcal{W}_i}\sum_{j\ne i}\dim(W\cap\mathcal{H}_j).
\]
With $\alpha_{\avg}:=\frac1n\sum_{i=1}^n\alpha_i$ and $\alpha_{\min}:=\min_{i\in[n]}\alpha_i$, we obtain
\[
  \beta_i(\mathcal{C})=\ell(n-1)-\alpha_i,\qquad
  \beta_{\avg}(\mathcal{C})=\ell(n-1)-\alpha_{\avg},\qquad
  \beta_{\max}(\mathcal{C})=\ell(n-1)-\alpha_{\min}.
\]

Unlike repair bandwidth, repair I/O is not determined by the node subspaces
$\mathcal{H}_1,\dots,\mathcal{H}_n$ alone: it also depends on the individual columns inside each block
$H_i$. We therefore record the set of projective column points
\[
  X_i:=\{\langle h_{i,1}\rangle,\dots,\langle h_{i,\ell}\rangle\}\subseteq\mathbb{P}(\mathcal{H}_i),
\]
where, for a nonzero subspace $U\le\mathbb{V}$, $\mathbb{P}(U)$ denotes its projectivization, namely
the set of $1$-dimensional subspaces of $U$, and $\langle h\rangle$ denotes the $1$-dimensional
subspace spanned by a nonzero vector $h$. Since the columns of $H_i$ are linearly independent, the
points in $X_i$ are distinct and span $\mathcal{H}_i$. Conversely, any set
$X_i\subseteq\mathbb{P}(\mathcal{H}_i)$ of $\ell$ distinct points spanning $\mathcal{H}_i$ can be
realized by choosing one nonzero representative from each point of $X_i$ as a column of $H_i$.

For $W\in\mathcal{W}_i$ and $j\ne i$, let
\[
  z_j(W):=\bigl|\{\,t\in[\ell]:h_{j,t}\in W\,\}\bigr|.
\]
Since $h\in W$ if and only if $\langle h\rangle\in\mathbb{P}(W)$, this may also be written as
\[
  z_j(W)=|X_j\cap\mathbb{P}(W)|.
\]
Since a column of $MH_j$ is zero exactly when the corresponding column of $H_j$ lies in $W$, we have
\[
  \IO_i(M)=\ell(n-1)-\sum_{j\ne i}z_j(W).
\]
Thus minimizing repair I/O is equivalent to maximizing the total number of helper-column points
captured by $W$. We therefore define
\[
  \lambda_i:=\max_{W\in\mathcal{W}_i}\sum_{j\ne i}z_j(W).
\]
With $\lambda_{\avg}:=\frac1n\sum_{i=1}^n\lambda_i$ and
$\lambda_{\min}:=\min_{i\in[n]}\lambda_i$, we obtain
\[
  \gamma_i(\mathcal{C})=\ell(n-1)-\lambda_i,\qquad
  \gamma_{\avg}(\mathcal{C})=\ell(n-1)-\lambda_{\avg},\qquad
  \gamma_{\max}(\mathcal{C})=\ell(n-1)-\lambda_{\min}.
\]

Thus repair bandwidth is governed by the intersection dimensions $W\cap\mathcal{H}_j$, whereas repair I/O depends on the captured column points $X_j\cap\mathbb{P}(W)$. The next section uses
this intrinsic formulation to derive the projective counting lower bound.

\section{The Projective Counting Bound}
\label{sec:projective-counting-bound}

\subsection{Deriving the Projective Counting Bound}

We continue to use the notation of Section~\ref{sec:intrinsic}, and recall that
\[
  T_{r,\ell}(q):=\frac{q^{(r-1)\ell}-1}{q-1}.
\]
We first bound the quantities $\alpha_i$ by a simple counting argument in projective space. The lower bound
for repair I/O then follows immediately from the inequality
$\gamma_i(\mathcal{C})\ge \beta_i(\mathcal{C})$.

\begin{lemma}\label{lem:projective-counting-bound}
  Assume that $r\ge 2$. Fix $i\in[n]$ and let $W\in\mathcal{W}_i$. Then
  \[
    \sum_{j\ne i}\dim(W\cap \mathcal{H}_j)\le T_{r,\ell}(q).
  \]
\end{lemma}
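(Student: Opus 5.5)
We bound each term $\dim(W\cap\mathcal{H}_j)$ by the number of projective points in $\mathbb{P}(W\cap\mathcal{H}_j)$, and then show that these point sets are pairwise disjoint inside $\mathbb{P}(W)$. Since $\dim(W)=(r-1)\ell$, the projective space $\mathbb{P}(W)$ has exactly $\frac{q^{(r-1)\ell}-1}{q-1}=T_{r,\ell}(q)$ points, so disjointness caps the total number of points by $T_{r,\ell}(q)$.

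First, for any subspace $U$ of dimension $d$ over $\F_q$, we have $|\mathbb{P}(U)|=\frac{q^d-1}{q-1}=1+q+\dots+q^{d-1}\ge d$. Equality holds for $d\in\{0,1\}$, and the inequality is strict for $d\ge 2$. Applying this with $U_j:=W\cap\mathcal{H}_j$ gives $\dim(U_j)\le|\mathbb{P}(U_j)|$ for every $j\ne i$.

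Second, the sets $\mathbb{P}(U_j)$ for $j\ne i$ are pairwise disjoint. This is where $r\ge 2$ enters. For distinct $j,j'\ne i$, we choose any $r$-subset $J\subseteq[n]$ containing both $j$ and $j'$, which is possible since $r\ge 2$ and $n\ge r$ (in fact $n>r$). The MDS condition says the sum $\sum_{t\in J}\mathcal{H}_t$ is direct, and in particular $\mathcal{H}_j\cap\mathcal{H}_{j'}=\{0\}$. Hence $U_j\cap U_{j'}=\{0\}$, so no projective point lies in both $\mathbb{P}(U_j)$ and $\mathbb{P}(U_{j'})$.

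Combining the two steps,
\[
\sum_{j\ne i}\dim(W\cap\mathcal{H}_j)\le\sum_{j\ne i}|\mathbb{P}(W\cap\mathcal{H}_j)|=\Bigl|\bigcup_{j\ne i}\mathbb{P}(W\cap\mathcal{H}_j)\Bigr|\le|\mathbb{P}(W)|=T_{r,\ell}(q).
\]
No step is a real obstacle; the only point needing care is the pairwise trivial intersection, which follows from directness of $r$-wise sums. We will also record for later use (toward Theorem~\ref{thm:strictness}) when equality can hold. It requires every nonzero $U_j$ to have dimension $1$, and the one-dimensional spaces $U_j$ must cover all points of $\mathbb{P}(W)$.
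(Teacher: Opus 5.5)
Your proposal is correct and follows the paper's proof: you bound each $\dim(W\cap\mathcal{H}_j)$ by $|\mathbb{P}(W\cap\mathcal{H}_j)|$, use the MDS condition to make these point sets pairwise disjoint, and count inside $\mathbb{P}(W)$, which has $T_{r,\ell}(q)$ points. Your equality remark likewise matches the paper's Remark~\ref{rem:necessary-length-for-equality}.
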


\begin{proof}
  Set $t_j:=\dim(W\cap \mathcal{H}_j)$ for $j\ne i$. Since $\mathcal{C}$ is MDS, for every
  $J\subseteq[n]$ with $|J|=r$, the sum $\sum_{j\in J}\mathcal{H}_j$ is direct; in particular,
  $\mathcal{H}_a\cap \mathcal{H}_b=\{0\}$ for all distinct $a,b\in[n]$. Hence
  $(W\cap \mathcal{H}_a)\cap (W\cap \mathcal{H}_b)=\{0\}$ for all distinct $a,b\ne i$, so the
  projective point sets $\mathbb{P}(W\cap \mathcal{H}_a)$ and $\mathbb{P}(W\cap \mathcal{H}_b)$ are
  pairwise disjoint, where we adopt the convention that $\mathbb{P}(0)=\emptyset$.

  Since $\dim(W)=(r-1)\ell$, we have
  \[
    \sum_{j\ne i} |\mathbb{P}(W\cap \mathcal{H}_j)| \le |\mathbb{P}(W)| = T_{r,\ell}(q).
  \]
  On the other hand, for every nonnegative integer $t$,
  \[
    t\le \frac{q^t-1}{q-1}=|\mathbb{P}(\F_q^t)|.
  \]
  Therefore $t_j\le |\mathbb{P}(W\cap \mathcal{H}_j)|$ for each $j\ne i$, and summing gives
  \[
    \sum_{j\ne i}\dim(W\cap \mathcal{H}_j)=\sum_{j\ne i} t_j \le T_{r,\ell}(q).
  \]
\end{proof}

\begin{remark}\label{rem:necessary-length-for-equality}
  In the setting of Lemma~\ref{lem:projective-counting-bound}, equality can hold only if
  \[
    n\ge 1+T_{r,\ell}(q).
  \]
  Indeed, if equality holds and $t_j:=\dim(W\cap \mathcal{H}_j)$ for $j\ne i$, then each $t_j$ must lie
  in $\{0,1\}$. For if some $t_j\ge 2$, then
  \[
    t_j<\frac{q^{t_j}-1}{q-1}=|\mathbb{P}(W\cap \mathcal{H}_j)|,
  \]
  so the inequality
  \[
    \sum_{j\ne i} t_j\le \sum_{j\ne i} |\mathbb{P}(W\cap \mathcal{H}_j)|
  \]
  would be strict, contradicting equality in Lemma~\ref{lem:projective-counting-bound}. Hence
  \[
    T_{r,\ell}(q)=\sum_{j\ne i} t_j
    =\#\{\,j\ne i:\dim(W\cap \mathcal{H}_j)=1\,\}\le n-1.
  \]
\end{remark}

The projective counting bound for repair bandwidth and repair I/O now follows immediately.

\begin{proof}[Proof of Theorem~\ref{thm:projective-counting-bound}]
  By Lemma~\ref{lem:projective-counting-bound}, one has $\alpha_i\le T_{r,\ell}(q)$ for every $i\in[n]$.
  Hence
  \[
    \beta_i(\mathcal{C})=\ell(n-1)-\alpha_i \ge \ell(n-1)-T_{r,\ell}(q)
  \]
  for every $i\in[n]$. Averaging over $i$ and taking the maximum over $i$ yield
  \[
    \beta_{\avg}(\mathcal{C}),\ \beta_{\max}(\mathcal{C}) \ge \ell(n-1)-T_{r,\ell}(q).
  \]
  Since $\gamma_i(\mathcal{C})\ge \beta_i(\mathcal{C})$ for every $i\in[n]$, the same lower bound
  also holds for $\gamma_{\avg}(\mathcal{C})$ and $\gamma_{\max}(\mathcal{C})$.
\end{proof}

\subsection{Strictness for \texorpdfstring{\(r\ge 3\)}{r >= 3}}

The projective counting bound arises from a rather coarse counting argument, so one might suspect that equality should be exceptional. We now show that this is indeed the case once $r\ge 3$ and $\ell\ge 2$: in that regime, the bound is never attained. 

We begin with a general upper bound on the size of a family of $\ell$-subspaces satisfying the
$r$-wise spanning condition, or equivalently, on the length of an MDS array code.

\begin{lemma}\label{lem:length-bound}
  Assume that $r\ge 2$. Let $\mathcal{H}_1,\dots,\mathcal{H}_n\le \mathbb{V}$ be $\ell$-dimensional subspaces such that
  \[
    \sum_{j\in J}\mathcal{H}_j=\mathbb{V}
  \]
  for every $J\subseteq[n]$ with $|J|=r$. Then
  \[
    n\le q^\ell+r-1.
  \]
\end{lemma}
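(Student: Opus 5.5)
Proof plan: reduce by quotienting to the case $r=2$, then count points.

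\emph{Reduction to $r=2$.} If $n\le r-1$ there is nothing to prove, so assume $n\ge r$. Set $U:=\mathcal{H}_{n-r+3}\oplus\cdots\oplus\mathcal{H}_n$. This is the sum of $r-2$ of the subspaces; it is direct and has dimension $(r-2)\ell$, because any $r-2$ of the $\mathcal{H}_j$ lie inside some $r$-subset, whose sum is direct. Let $\pi:\mathbb{V}\to\mathbb{V}/U$ be the quotient map, where $\mathbb{V}/U$ has dimension $2\ell$. For any two indices $a\ne b$ among the remaining $m:=n-r+2$ indices, the $r$-wise spanning condition applied to $\{a,b\}\cup\{n-r+3,\dots,n\}$ gives
\[
\mathcal{H}_a+\mathcal{H}_b+U=\mathbb{V}.
\]
Hence $\pi(\mathcal{H}_a)+\pi(\mathcal{H}_b)=\mathbb{V}/U$. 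By dimension count, each $\pi(\mathcal{H}_a)$ has dimension exactly $\ell$, and $\pi(\mathcal{H}_a)\cap\pi(\mathcal{H}_b)=\{0\}$.

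\emph{Counting in the quotient.} We now have $m$ subspaces of dimension $\ell$ in a $2\ell$-dimensional space, and they pairwise intersect trivially. Their projectivizations are therefore pairwise disjoint sets, each of size $\frac{q^\ell-1}{q-1}$, inside $\mathbb{P}(\F_q^{2\ell})$, which has $\frac{q^{2\ell}-1}{q-1}$ points. This gives
\[
m\cdot\frac{q^\ell-1}{q-1}\le\frac{q^{2\ell}-1}{q-1},
\]
so $m\le q^\ell+1$. Equivalently, one can count nonzero vectors: $m(q^\ell-1)\le q^{2\ell}-1$. Since $m=n-r+2$, this gives $n\le q^\ell+r-1$.

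\emph{Main point to check.} The only step that needs care is that the projected subspaces remain $\ell$-dimensional and pairwise trivially intersecting. This follows from the spanning identity above by a dimension count: $\dim\pi(\mathcal{H}_a)+\dim\pi(\mathcal{H}_b)\le 2\ell$, while their sum equals $2\ell$. No other difficulty is expected; this is the usual shortening argument, with the partial spread bound applied to the quotient.
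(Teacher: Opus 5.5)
Your proposal is correct and follows essentially the same route as the paper: quotient by the direct sum $U$ of $r-2$ of the node subspaces, observe that the $n-r+2$ projected subspaces are $\ell$-dimensional and pairwise complementary in the $2\ell$-dimensional quotient, and count nonzero vectors (or points) to get $n-r+2\le q^\ell+1$. The only cosmetic difference is that you obtain $\dim\pi(\mathcal{H}_a)=\ell$ from the dimension count on $\pi(\mathcal{H}_a)+\pi(\mathcal{H}_b)=\mathbb{V}/U$, whereas the paper obtains it from $U\cap\mathcal{H}_j=\{0\}$.
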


\begin{proof}
  Choose any subset $T\subseteq[n]$ with $|T|=r-2$, and set
  \[
    U:=\bigoplus_{j\in T}\mathcal{H}_j\le \mathbb{V}.
  \]
  Then $\dim(U)=(r-2)\ell$. Let $\overline{\mathbb{V}}:=\mathbb{V}/U$, so
  $\dim(\overline{\mathbb{V}})=2\ell$. For each $j\in[n]\setminus T$, define
  \[
    \overline{\mathcal{H}}_j:=(\mathcal{H}_j+U)/U\le \overline{\mathbb{V}}.
  \]
  Since the spanning condition implies that $U\cap \mathcal{H}_j=\{0\}$, each
  $\overline{\mathcal{H}}_j$ has dimension $\ell$.

  Now let $j_1\neq j_2$ lie in $[n]\setminus T$. Applying the spanning condition to
  $T\cup\{j_1,j_2\}$ gives
  \[
    \mathbb{V}=U\oplus \mathcal{H}_{j_1}\oplus \mathcal{H}_{j_2},
  \]
  and hence
  \[
    \overline{\mathbb{V}}=\overline{\mathcal{H}}_{j_1}\oplus \overline{\mathcal{H}}_{j_2}.
  \]
  In particular, the nonzero sets $\overline{\mathcal{H}}_j\setminus\{0\}$, for $j\in[n]\setminus T$,
  are pairwise disjoint subsets of $\overline{\mathbb{V}}\setminus\{0\}$. Counting nonzero vectors gives
  \[
    \bigl(n-(r-2)\bigr)(q^\ell-1)\le q^{2\ell}-1=(q^\ell-1)(q^\ell+1),
  \]
  and therefore $n-(r-2)\le q^\ell+1$, i.e.
  \[
    n\le q^\ell+r-1.
  \]
\end{proof}

We can now complete the proof that the projective counting bound is strict for $r\ge 3$.

\begin{proof}[Proof of Theorem~\ref{thm:strictness}]
  Suppose, for contradiction, that equality holds in at least one of the repair-bandwidth bounds, i.e.
  \[
    \beta_{\avg}(\mathcal{C})=\ell(n-1)-T_{r,\ell}(q)
    \qquad\text{or}\qquad
    \beta_{\max}(\mathcal{C})=\ell(n-1)-T_{r,\ell}(q).
  \]
  Since
  \[
    \beta_{\avg}(\mathcal{C})=\ell(n-1)-\alpha_{\avg},
    \qquad
    \beta_{\max}(\mathcal{C})=\ell(n-1)-\alpha_{\min},
  \]
  and $\alpha_i\le T_{r,\ell}(q)$ for every $i\in[n]$ by Lemma~\ref{lem:projective-counting-bound},
  it follows that $\alpha_i=T_{r,\ell}(q)$ for some $i\in[n]$. By the definition of $\alpha_i$, there
  therefore exists $W\in\mathcal{W}_i$ such that
  \[
    \sum_{j\ne i}\dim(W\cap \mathcal{H}_j)=T_{r,\ell}(q).
  \]
  Remark~\ref{rem:necessary-length-for-equality} now gives
  \[
    n\ge 1+T_{r,\ell}(q).
  \]
  On the other hand, Lemma~\ref{lem:length-bound} gives
  \[
    n\le q^\ell+r-1.
  \]
  These two inequalities are incompatible. Indeed, since $r\ge 3$ and $\ell\ge 2$, the sum
  \[
    T_{r,\ell}(q)=1+q+\cdots+q^{(r-1)\ell-1}
  \]
  strictly contains the terms
  \[
    1,\ q^\ell,\ q^{2\ell},\ \dots,\ q^{(r-2)\ell},
  \]
  and therefore
  \[
    T_{r,\ell}(q)>1+q^\ell+q^{2\ell}+\cdots+q^{(r-2)\ell}\ge q^\ell+r-2.
  \]
  Hence
  \[
    1+T_{r,\ell}(q)>q^\ell+r-1.
  \]
  This contradiction shows that
  \[
    \beta_{\avg}(\mathcal{C}),\ \beta_{\max}(\mathcal{C})
    \ >\ \ell(n-1)-T_{r,\ell}(q).
  \]
  Finally, since $\gamma_i(\mathcal{C})\ge \beta_i(\mathcal{C})$ for every $i\in[n]$, it follows that
  \[
    \gamma_{\avg}(\mathcal{C}),\ \gamma_{\max}(\mathcal{C})
    \ >\ \ell(n-1)-T_{r,\ell}(q)
  \]
  as well.
\end{proof}

\begin{remark}
  The proof of Lemma~\ref{lem:projective-counting-bound} is reminiscent of a sphere-packing argument: one packs pairwise disjoint projective point sets inside the ambient space $\mathbb{P}(W)$. In this light, the strictness result for $r\ge 3$ is analogous in spirit to the rigidity of perfect codes in classical coding theory (see \cite[Theorem~7.5.1]{lint1999introduction} for the binary case, and \cite{tietavainen1973nonexistence} for the general finite-field setting).
\end{remark}

\section{The Two-Parity Case}
\label{sec:two-parity}

\subsection{Constructions Attaining the Projective Counting Bound}
We now specialize to the two-parity case $r=2$. Set
\[
  t_\ell(q):=T_{2,\ell}(q)=\frac{q^\ell-1}{q-1}.
\]
Throughout this section, we identify $\mathbb{V}=\F_q^{2\ell}$ with $\F_{q^\ell}^2$ as $\F_q$-vector
spaces.

We begin by specifying a convenient family of candidate node subspaces. For each $c\in\F_{q^\ell}$, define
\[
  \mathcal{H}_c:=\{(x,cx):x\in\F_{q^\ell}\}\le\mathbb{V},
  \qquad
  \mathcal{H}_\infty:=\{(0,y):y\in\F_{q^\ell}\}\le\mathbb{V}.
\]
Let
\[
  \mathcal{S}_{\mathrm{Des}}
  :=\{\mathcal{H}_c:c\in\F_{q^\ell}\}\cup\{\mathcal{H}_\infty\}.
\]
This is the standard Desarguesian spread. Its definition and basic properties are recalled in Appendix~\ref{app:finite-geometry}. In particular, each member of \(\mathcal{S}_{\mathrm{Des}}\) is an \(\ell\)-dimensional
\(\F_q\)-subspace of \(\mathbb{V}\), and any two distinct members of
\(\mathcal{S}_{\mathrm{Des}}\) are complementary in \(\mathbb{V}\). Hence
they satisfy the \(r\)-wise spanning condition for \(r=2\).

We then introduce a family of candidate repair
subspaces whose intersections with the standard Desarguesian spread elements can be described explicitly. For each
$b\in\F_{q^\ell}^\times$, define
\[
  W_b:=\{(x,bx^q):x\in\F_{q^\ell}\}\le\mathbb{V}.
\]

\begin{lemma}\label{lem:two-parity-hit-set}
  Let
  \[
    \Sigma:=\ker\!\bigl(N_{\F_{q^\ell}/\F_q}\bigr)\subseteq \F_{q^\ell}^\times,
  \]
  where \(N_{\F_{q^\ell}/\F_q}:\F_{q^\ell}^\times\to\F_q^\times\) denotes the norm map. Then
  \(|\Sigma|=t_\ell(q)\). For every \(b\in\F_{q^\ell}^\times\) and every \(c\in\F_{q^\ell}\), one has
  \[
    W_b\cap \mathcal{H}_c\neq\{0\}
    \quad\Longleftrightarrow\quad
    c\in b\Sigma,
  \]
  and, whenever this holds,
  \[
    \dim(W_b\cap \mathcal{H}_c)=1.
  \]
  Moreover,
  \[
    W_b\cap \mathcal{H}_\infty=\{0\}.
  \]
\end{lemma}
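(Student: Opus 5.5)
First, I record the size of \(\Sigma\). The norm map \(N_{\F_{q^\ell}/\F_q}(x)=x^{1+q+\cdots+q^{\ell-1}}=x^{t_\ell(q)}\) is a homomorphism between cyclic groups. Since \(\F_{q^\ell}^\times\) is cyclic of order \(q^\ell-1\) and \(t_\ell(q)\) divides \(q^\ell-1\), the image of \(x\mapsto x^{t_\ell(q)}\) has order \((q^\ell-1)/t_\ell(q)=q-1\). Hence the norm is surjective onto \(\F_q^\times\), and its kernel \(\Sigma\) has order \(t_\ell(q)\). Equivalently, \(\Sigma\) is the unique subgroup of order \(t_\ell(q)\), i.e.\ the set of \((q-1)\)-th powers.

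Next, I compute the intersection with \(\mathcal{H}_c\) directly. A vector lies in \(W_b\cap\mathcal{H}_c\) exactly when it has the form \((x,bx^q)=(x,cx)\), i.e.\ when \(bx^q=cx\). So the intersection is \(\{(x,cx): x\in\F_{q^\ell},\ bx^q=cx\}\), which is \(\F_q\)-isomorphic to the solution set \(K_c:=\{x: bx^q-cx=0\}\). This set is an \(\F_q\)-subspace because \(x\mapsto bx^q-cx\) is \(\F_q\)-linear.

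For \(c=0\), the equation \(bx^q=0\) forces \(x=0\), and \(0\notin b\Sigma\), so the claim holds in this case.

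For \(c\ne 0\), a nonzero solution satisfies \(x^{q-1}=c/b\). The map \(x\mapsto x^{q-1}\) on \(\F_{q^\ell}^\times\) has kernel \(\F_q^\times\) and image the subgroup of \((q-1)\)-th powers, which has order \(t_\ell(q)\) and therefore equals \(\Sigma\). (Alternatively, \(N(x^{q-1})=N(x)^{q-1}=1\), and the orders match.) So a nonzero solution exists iff \(c/b\in\Sigma\), i.e.\ iff \(c\in b\Sigma\).

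When a nonzero solution exists, the solutions \(x\ne0\) form a single coset \(x_0\F_q^\times\). Together with \(0\), these give the line \(\F_q x_0\), so \(\dim K_c=1\).

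Finally, \(W_b\cap\mathcal{H}_\infty\) consists of vectors \((x,bx^q)\) with \(x=0\). Such a vector is \((0,0)\), so the intersection is trivial.

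The only step needing care is identifying the image of \(x\mapsto x^{q-1}\) with \(\ker N\). I will do this by the order count above, using cyclicity of \(\F_{q^\ell}^\times\): both are subgroups of order \(t_\ell(q)\) in a cyclic group, and one contains the other, so they coincide.
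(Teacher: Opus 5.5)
Your proposal is correct and follows essentially the same route as the paper. You reduce the intersection condition to \(bx^{q}=cx\), identify the image of \(x\mapsto x^{q-1}\) with \(\Sigma\) by a cardinality count, and get the one-dimensional intersection from the fact that the fibers of this map are cosets of \(\F_q^\times\). Beyond the paper's argument, you also prove surjectivity of the norm (which the paper takes for granted) and treat \(c=0\) explicitly.
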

\begin{proof}
  Since the norm map is a surjective group homomorphism, \(|\Sigma|=(q^\ell-1)/(q-1)=t_\ell(q)\). Also, \(\Sigma=\{u^{q-1}:u\in\F_{q^\ell}^\times\}\). Indeed, the map \(u\mapsto u^{q-1}\) has kernel
  \(\F_q^\times\), hence image of size \((q^\ell-1)/(q-1)\). This image is contained in
  \(\ker(N_{\F_{q^\ell}/\F_q})=\Sigma\), and the two sets have the same cardinality.

  Now \(W_b\cap\mathcal{H}_c\neq\{0\}\) if and only if there exists \(x\neq 0\) such that
  \((x,bx^q)=(x,cx)\), equivalently \(c=bx^{q-1}\). By the description of \(\Sigma\), this is
  equivalent to \(c\in b\Sigma\).

  If \(c=bu^{q-1}\) for some \(u\in\F_{q^\ell}^\times\), then
  \(W_b\cap\mathcal{H}_c=\{(\lambda u,\lambda cu):\lambda\in\F_q\}\), so
  \(\dim(W_b\cap\mathcal{H}_c)=1\).

  Finally, if \((0,y)\in W_b\), then \((0,y)=(x,bx^q)\) forces \(x=0\), hence \(y=0\). Thus
  \(W_b\cap\mathcal{H}_\infty=\{0\}\).
\end{proof}

We are now in a position to prove Theorem~\ref{thm:two-parity-attainability-general}.

\begin{proof}[Proof of Theorem~\ref{thm:two-parity-attainability-general}]
  We first treat the main range
\[
  2t_\ell(q)\le n\le q^\ell+1.
\]
This already proves Theorem~\ref{thm:two-parity-attainability-general} whenever $q\ge 5$ or
$\ell\ge 3$, since then $t_\ell(q)\ge 6$ and hence
\[
  \min\{2t_\ell(q),\,3t_\ell(q)-6\}=2t_\ell(q).
\] 
The only remaining cases are \((q,\ell,n)=(3,2,6),(3,2,7),(4,2,9)\), which will be handled in Appendix~\ref{app:exceptional-cases}.

  Let \(\Sigma:=\ker\!\bigl(N_{\F_{q^\ell}/\F_q}\bigr)\subseteq\F_{q^\ell}^\times\). Since the norm map is
  surjective and \(q\ge 3\), we may choose \(b_1,b_2\in\F_{q^\ell}^\times\) with distinct norms in
  \(\F_q^\times\). By Lemma~\ref{lem:two-parity-hit-set}, \(|\Sigma|=t_\ell(q)\), so the cosets
  \[
    C_1:=b_1\Sigma,\qquad C_2:=b_2\Sigma
  \]
  are disjoint subsets of \(\F_{q^\ell}^\times\), each of size \(t_\ell(q)\).

  Choose any \(\Omega\subseteq\F_{q^\ell}\cup\{\infty\}\) with \(|\Omega|=n\) and
  \(C_1\cup C_2\subseteq\Omega\); this is possible because \(2t_\ell(q)\le n\le q^\ell+1\). Enumerate the elements of \(\Omega\) as
  \[
    \Omega=\{c_1,\dots,c_n\},
  \]
  and set
  \[
    \mathcal{H}_i:=\mathcal{H}_{c_i}\qquad (i\in[n]).
  \]

  For each \(i\in[n]\), define
  \[
    W_i:=
    \begin{cases}
      W_{b_2}, & \text{if }\mathcal{H}_i=\mathcal{H}_c\text{ for some }c\in C_1,\\[2mm]
      W_{b_1}, & \text{otherwise.}
    \end{cases}
  \]
  By Lemma~\ref{lem:two-parity-hit-set}, \(W_{b_1}\) meets precisely the spread elements indexed by
  \(C_1\), and \(W_{b_2}\) precisely those indexed by \(C_2\), with every nonzero intersection having
  dimension \(1\). Since \(C_1\cap C_2=\emptyset\), the chosen \(W_i\) satisfies
  \(W_i\cap\mathcal{H}_i=\{0\}\), so \(W_i\in\mathcal{W}_i\). Moreover,
  \[
    \sum_{j\ne i}\dim(W_i\cap\mathcal{H}_j)=t_\ell(q)
  \]
  for every \(i\in[n]\), because exactly the \(t_\ell(q)\) subspaces indexed by the opposite coset
  contribute dimension \(1\), and all other intersections are trivial. Hence \(\alpha_i\ge t_\ell(q)\)
  for every \(i\). Since Lemma~\ref{lem:projective-counting-bound} gives the upper bound
  \(\alpha_i\le t_\ell(q)\), we obtain
  \[
    \alpha_i=t_\ell(q)\qquad\text{for all }i\in[n].
  \]

  We now choose the projective column sets \(X_i\subseteq\mathbb{P}(\mathcal{H}_i)\). If
  \(\mathcal{H}_i=\mathcal{H}_c\) with \(c\in C_1\), then \(\dim(W_{b_1}\cap\mathcal{H}_i)=1\), so
  \(\mathbb{P}(W_{b_1}\cap\mathcal{H}_i)\) is a single point of \(\mathbb{P}(\mathcal{H}_i)\); choose
  \(X_i\) to be any set of \(\ell\) distinct points spanning \(\mathcal{H}_i\) and containing this
  point. Similarly, if \(\mathcal{H}_i=\mathcal{H}_c\) with \(c\in C_2\), choose \(X_i\) to be any set
  of \(\ell\) distinct points spanning \(\mathcal{H}_i\) and containing the unique point of
  \(\mathbb{P}(W_{b_2}\cap\mathcal{H}_i)\). For all remaining node subspaces (including
  \(\mathcal{H}_\infty\), if selected), choose any set \(X_i\subseteq\mathbb{P}(\mathcal{H}_i)\) of
  \(\ell\) distinct points spanning \(\mathcal{H}_i\).

  By the converse statements in Section~\ref{sec:intrinsic}, the data
  \[
    \bigl(\mathcal{H}_i,X_i\bigr)_{i=1}^n
  \]
  are realized by an \((n,n-2,\ell)\) MDS array code \(\mathcal{C}\) over \(\F_q\).

  Since \(\alpha_i=t_\ell(q)\) for every \(i\), Section~\ref{sec:intrinsic} gives
  \[
    \beta_i(\mathcal{C})=\ell(n-1)-t_\ell(q)\qquad\text{for all }i\in[n].
  \]
  Hence
  \[
    \beta_{\avg}(\mathcal{C})=\beta_{\max}(\mathcal{C})=\ell(n-1)-t_\ell(q).
  \]

  Fix \(i\in[n]\). If \(W_i=W_{b_2}\), then by construction one has \(z_j(W_i)=1\) exactly for those
  helper nodes \(\mathcal{H}_j\) indexed by \(C_2\), and \(z_j(W_i)=0\) for all other helper nodes.
  Hence
  \[
    \sum_{j\ne i} z_j(W_i)=t_\ell(q).
  \]
  The case \(W_i=W_{b_1}\) is identical. Therefore \(\lambda_i\ge t_\ell(q)\) for every \(i\in[n]\),
  and so
  \[
    \gamma_i(\mathcal{C})\le \ell(n-1)-t_\ell(q).
  \]
  On the other hand, always \(\gamma_i(\mathcal{C})\ge \beta_i(\mathcal{C})\), and we already proved
  that \(\beta_i(\mathcal{C})=\ell(n-1)-t_\ell(q)\). Thus
  \[
    \gamma_i(\mathcal{C})=\ell(n-1)-t_\ell(q)\qquad\text{for all }i\in[n].
  \]
  Consequently,
  \[
    \gamma_{\avg}(\mathcal{C})=\gamma_{\max}(\mathcal{C})=\ell(n-1)-t_\ell(q).
  \]
\end{proof}

\subsection{On the Necessity of the Length Condition}

Computational evidence suggests that, even below the length range in
Theorem~\ref{thm:two-parity-attainability-general}, the projective counting bound can still be
attained. Nevertheless, in the case \(r=\ell=2\), if one further assumes that all node lines lie in a
fixed regular spread, then the same length condition becomes necessary as well.

Set \[ J:=2(n-1)-\frac{q^2-1}{q-1}=2(n-1)-(q+1), \] which is the projective counting lower bound in the case \(r=\ell=2\). 

\begin{proposition}\label{prop:regular-spread-necessary-length} Assume that \(q\ge 3\) and \(r=\ell=2\). Let \(\mathcal S\) be a regular spread of \(\PG(3,q)\), and let \(\mathcal C\) be an \((n,n-2,2)\) MDS array code over \(\F_q\) whose induced node lines all lie in \(\mathcal S\). If \[ \beta_{\avg}(\mathcal C)=J \qquad\text{or}\qquad \beta_{\max}(\mathcal C)=J, \] then \[ n\ge \min\{2q+2,\,3q-3\}. \] \end{proposition}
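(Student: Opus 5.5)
The plan is to translate equality into a combinatorial covering condition on the set of node lines, using the regulus structure of a regular spread, and then to count.

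\emph{Step 1: equality forces every node to be optimal.} By Lemma~\ref{lem:projective-counting-bound} we have $\alpha_i\le q+1$ for every $i$. So $\beta_{\avg}(\mathcal C)=J$ forces $\alpha_{\avg}=q+1$, and $\beta_{\max}(\mathcal C)=J$ forces $\alpha_{\min}=q+1$. In either case $\alpha_i=q+1$ for \emph{every} $i\in[n]$. Let $N\subseteq\mathcal S$ denote the set of node lines. By Remark~\ref{rem:necessary-length-for-equality}, for each $i$ there is a line $W_i$ of $\PG(3,q)$ with the following properties: $W_i$ is skew to $\mathcal H_i$, it meets exactly $q+1$ helper lines, and it meets each of them in a single point. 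Since $\mathcal S$ partitions the points of $\PG(3,q)$, the $q+1$ points of $W_i$ lie on $q+1$ distinct spread lines, and $W_i\notin\mathcal S$. Hence \emph{all} spread lines meeting $W_i$ are node lines different from $\mathcal H_i$.

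\emph{Step 2: bring in reguli.} I will invoke the standard facts about a regular spread (recalled in the appendix on finite geometry). First, the $q+1$ spread lines meeting a line not in $\mathcal S$ form a regulus contained in $\mathcal S$. Second, any three lines of $\mathcal S$ lie in a unique regulus, which is contained in $\mathcal S$. The second fact implies that two distinct reguli contained in $\mathcal S$ share at most two lines. Step~1 then becomes the following statement: for every $H\in N$ there is a regulus $R\subseteq N$ with $H\notin R$. The main obstacle is justifying these regulus facts cleanly. I would do this in the Desarguesian/$\F_{q^2}$ model, where the spread lines meeting $W_b$ are indexed by the coset $b\Sigma$, which is a Baer subline, i.e.\ a regulus.

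\emph{Step 3: counting.} Pick any node and a regulus $R_1\subseteq N$ avoiding it. Pick $H\in R_1$ and a regulus $R_2\subseteq N$ with $H\notin R_2$; then $R_2\ne R_1$.
\begin{itemize}
\item If $R_1\cap R_2=\emptyset$, then $n\ge|R_1\cup R_2|=2q+2$.
\item Otherwise choose $H'\in R_1\cap R_2$ and a regulus $R_3\subseteq N$ with $H'\notin R_3$. Then $R_3$ is distinct from both $R_1$ and $R_2$, so all three pairwise intersections have size at most $2$. Since $H'\notin R_3$, the triple intersection is empty. Inclusion--exclusion then gives
\[
n\ge|R_1\cup R_2\cup R_3|\ge 3(q+1)-|R_1\cap R_2|-|R_1\cap R_3|-|R_2\cap R_3|\ge 3q-3.
\]
\end{itemize}
In both cases $n\ge\min\{2q+2,3q-3\}$. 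The hypothesis $q\ge 3$ serves only to make the stated minimum meaningful; it is not used in the argument beyond that.
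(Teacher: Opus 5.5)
Your proof is correct. It follows the paper's reduction up to the final count: equality forces $\alpha_i=q+1$ for every $i$, each optimal repair line $W_i$ is a non-spread line whose hit set is a regulus of $\mathcal S$ avoiding node $i$, and distinct such reguli share at most two lines. You identify the hit set with all spread lines meeting $W_i$ by a direct point count (each of the $q+1$ points of $W_i$ already lies on a hit node line). The paper instead compares cardinalities using Proposition~\ref{prop:transversal-regulus}; the two are equivalent.

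Where you genuinely diverge is the combinatorial finish. The paper packages it as Lemma~\ref{lem:block-intersection-principle}. It passes to a minimal subfamily with empty intersection, shows this subfamily has $m\le 4$ members, and treats $m=2,3,4$ separately, with $m=4$ giving $4t-8\ge 3t-6$. Your greedy choice of at most three reguli shows that this case analysis is unnecessary. If two of the reguli are disjoint you get $2(q+1)$; otherwise any three distinct reguli already give $3(q+1)-6$ by truncated inclusion--exclusion. The paper's route buys a self-contained statement about set families with empty total intersection. Yours is shorter and makes clear that only two or three hit sets are ever needed.

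Two small points to fix in the write-up.

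\emph{The triple intersection.} The sentence ``Since $H'\notin R_3$, the triple intersection is empty'' does not follow: if $R_1\cap R_2=\{H',H''\}$, nothing prevents $H''\in R_3$. This is harmless, because $|R_1\cup R_2\cup R_3|\ge\sum_i|R_i|-\sum_{i<j}|R_i\cap R_j|$ holds unconditionally (the triple-intersection term enters with a plus sign). Simply delete that sentence.

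\emph{Justifying the regulus facts.} Your suggested justification through the lines $W_b$ is incomplete as stated, since an arbitrary optimal repair line $W_i$ need not be of the form $W_b$. You can cite Proposition~\ref{prop:transversal-regulus} and Corollary~\ref{cor:reguli-intersection} directly. Alternatively, transport $\mathcal S$ to $\mathcal S_{\mathrm{Des}}$ by Proposition~\ref{prop:regular-spread-equivalent}. Then observe that any non-spread line $W$ has an $\F_q$-basis that is $\F_{q^2}$-independent, so $W=g(\F_q^2)$ for some $g\in\GL_2(\F_{q^2})$, and its hit set is the Baer subline $g\cdot\mathbb P^1(\F_q)$. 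Distinct Baer sublines share at most two points, because $\mathrm{PGL}_2(\F_{q^2})$ is sharply $3$-transitive, so a Baer subline is determined by any three of its points.
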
 

\begin{proof} 
  Let \(\mathcal H_1,\dots,\mathcal H_n\le \mathbb V=\F_q^4\) be the induced node subspaces, and write \[ L_i:=\mathbb P(\mathcal H_i)\in \mathcal S \qquad (i\in[n]). \] Since \(\mathcal C\) is MDS and \(r=2\), the subspaces \(\mathcal H_1,\dots,\mathcal H_n\) are pairwise disjoint, hence the lines \(L_1,\dots,L_n\) are pairwise skew. 
  
  By the projective counting bound specialized to \(r=\ell=2\), one has $\beta_i(\mathcal C)\ge J$ for every $i\in[n]$. Therefore, if either \(\beta_{\avg}(\mathcal C)=J\) or \(\beta_{\max}(\mathcal C)=J\), then in fact $\beta_i(\mathcal C)=J$ for all $i\in[n]$. Equivalently, $\alpha_i=q+1$ for all $i\in[n]$. Hence for each \(i\in[n]\) there exists \(W_i\in\mathcal W_i\) such that \[ \sum_{j\ne i}\dim(W_i\cap \mathcal H_j)=q+1. \] Since this attains equality in Lemma~\ref{lem:projective-counting-bound}, the equality discussion in Remark~\ref{rem:necessary-length-for-equality} shows that \[ \dim(W_i\cap \mathcal H_j)\in\{0,1\} \qquad\text{for all }j\ne i. \] 
  
  Define the hit set \[ B_i:=\{\,j\in[n]\setminus\{i\}: \dim(W_i\cap \mathcal H_j)=1\,\}. \] Then \(|B_i|=q+1\) and \(i\notin B_i\). Let \(m_i:=\mathbb P(W_i)\). Since \(|B_i|=q+1>0\), the line \(m_i\) meets some line \(L_j\in\mathcal S\). We claim that
  \(m_i\notin\mathcal S\). Indeed, if \(m_i\in\mathcal S\), then as the lines of a spread are pairwise
  skew, the only line of \(\mathcal S\) meeting \(m_i\) would be \(m_i\) itself. Hence \(B_i\) would have
  size at most \(1\), contradicting \(|B_i|=q+1\). Therefore \(m_i\notin\mathcal S\). By Proposition~\ref{prop:transversal-regulus}, the set \[ R(m_i):=\{\,L\in\mathcal S:L\cap m_i\neq\emptyset\,\} \] is a regulus contained in \(\mathcal S\), of size \(q+1\). Since each \(j\in B_i\) satisfies \(m_i\cap L_j\neq\emptyset\), we have \[ \{L_j:j\in B_i\}\subseteq R(m_i). \] Both sides have size \(q+1\), so in fact \[ R(m_i)=\{L_j:j\in B_i\}. \] Let \(\mathcal B:=\{B_i:i\in[n]\}\) be the family of distinct hit sets. If \(B,B'\in\mathcal B\) are distinct, choose \(i,j\in[n]\) with \(B_i=B\) and \(B_j=B'\). Then \(R(m_i)\neq R(m_j)\), and Corollary~\ref{cor:reguli-intersection} gives \[ |B\cap B'|=|R(m_i)\cap R(m_j)|\le 2. \] Finally, for every \(x\in[n]\) one has \(x\notin B_x\), so some member of \(\mathcal B\) omits \(x\). Hence \[ \bigcap_{B\in\mathcal B} B=\emptyset. \] Applying Lemma~\ref{lem:block-intersection-principle} to \(\mathcal B\), we obtain \[ n\ge \min\{2(q+1),\,3(q+1)-6\}=\min\{2q+2,\,3q-3\}. \] \end{proof}

\begin{proof}[Proof of Theorem~\ref{thm:two-parity-attainability-regular-spread}] 
  We first prove \((1)\Rightarrow(2)\). By Theorem~\ref{thm:two-parity-attainability-general}, whose remaining exceptional cases are settled in
  Appendix~\ref{app:exceptional-cases}, for every \(n\) satisfying \[ \min\{2q+2,\,3q-3\}\le n\le q^2+1 \] there exists an \((n,n-2,2)\) MDS array code over \(\F_q\) for which \[ \beta_{\avg}(\mathcal{C})=\beta_{\max}(\mathcal{C}) =\gamma_{\avg}(\mathcal{C})=\gamma_{\max}(\mathcal{C})=J. \] These constructions are obtained by selecting node lines from a standard Desarguesian spread of \(\PG(3,q)\). Since by Proposition~\ref{prop:regular-spread-equivalent}, every regular spread in \(\PG(3,q)\) is projectively equivalent to a standard Desarguesian spread, a projective automorphism of \(\PG(3,q)\) carries the node lines of such a construction into the fixed spread \(\mathcal{S}\). Transporting the intrinsic data by this automorphism preserves the MDS property and all repair parameters. Thus there exists an \((n,n-2,2)\) MDS array code whose induced node lines all lie in \(\mathcal{S}\), and for which in fact all four quantities \[ \beta_{\avg}(\mathcal{C}),\ \beta_{\max}(\mathcal{C}),\ \gamma_{\avg}(\mathcal{C}),\ \gamma_{\max}(\mathcal{C}) \] are equal to \(J\).
  
  We then prove \((2)\Rightarrow(1)\). Suppose that there exists an \((n,n-2,2)\) MDS array code \(\mathcal{C}\) over \(\F_q\) whose induced node lines all lie in \(\mathcal{S}\), and such that at least one of \[ \beta_{\avg}(\mathcal{C}),\ \beta_{\max}(\mathcal{C}),\ \gamma_{\avg}(\mathcal{C}),\ \gamma_{\max}(\mathcal{C}) \] is equal to \(J\). Since the induced node lines satisfy the \(r\)-wise spanning condition with \(r=\ell=2\),
  Lemma~\ref{lem:length-bound} gives
  \[n\le q^2+1.\]
  
  For the lower bound, if either \(\gamma_{\avg}(\mathcal{C})=J\) or \(\gamma_{\max}(\mathcal{C})=J\), then since
\[
  \gamma_{\avg}(\mathcal{C})\ge \beta_{\avg}(\mathcal{C}),
  \qquad
  \gamma_{\max}(\mathcal{C})\ge \beta_{\max}(\mathcal{C}),
\]
and the projective counting bound gives
\[
  \beta_{\avg}(\mathcal{C})\ge J,
  \qquad
  \beta_{\max}(\mathcal{C})\ge J,
\]
the corresponding repair-bandwidth parameter also equals \(J\). Thus, in all cases, either
\(\beta_{\avg}(\mathcal{C})=J\) or \(\beta_{\max}(\mathcal{C})=J\). Proposition~\ref{prop:regular-spread-necessary-length}
therefore gives
\[
  n\ge \min\{2q+2,\,3q-3\}.
\]
Combining this with $n\le q^2+1$, we obtain
\[
  \min\{2q+2,\,3q-3\}\le n\le q^2+1,
\]
which is exactly \((1)\).
\end{proof}
\begin{remark}
  Theorem~\ref{thm:two-parity-attainability-regular-spread} suggests a limitation of the Desarguesian-spread
  constructions used above. It would therefore be interesting to look for different constructions that
  attain the projective counting bound over a wider range of lengths.
\end{remark}

\appendix
\section*{Appendix}
\phantomsection
\addcontentsline{toc}{section}{Appendix}
\setcounter{subsection}{0}
\renewcommand{\thesubsection}{\Alph{subsection}}
\makeatletter
\gdef\theHsection{appendix}%
\makeatother
\numberwithin{theorem}{subsection}

\subsection{Background on Spreads and Reguli} \label{app:finite-geometry} 
We briefly recall the finite-geometric notions used in the main text. We write \(\PG(m,q)\) for the \(m\)-dimensional projective space over \(\F_q\). 

\begin{definition} A \emph{spread} of \(\PG(2\ell-1,q)\) is a family \(\mathcal S\) of \((\ell-1)\)-dimensional projective subspaces that partition the points of \(\PG(2\ell-1,q)\). 
\end{definition} 
\begin{remark} Equivalently, a spread of \(\PG(2\ell-1,q)\) may be viewed as a family of \(\ell\)-dimensional \(\F_q\)-subspaces of \(\F_q^{2\ell}\) whose nonzero vectors partition \(\F_q^{2\ell}\setminus\{0\}\). In particular, if \(\mathcal S\) is a spread, then for any distinct \(H,H'\in\mathcal S\) one has \(H\oplus H'=\F_q^{2\ell}\), and \(|\mathcal S|=q^\ell+1\). 
\end{remark} 

We now recall the standard Desarguesian spread construction. Identify \(\F_q^{2\ell}\) with \(\F_{q^\ell}^2\) as \(\F_q\)-vector spaces. 

\begin{construction}[Standard Desarguesian spread]\label{cons:desarguesian-spread} For each \(a\in\F_{q^\ell}\), define \[ \mathcal H_a:=\{(x,ax):x\in\F_{q^\ell}\}\le \F_{q^\ell}^2, \] and define \[ \mathcal H_\infty:=\{(0,y):y\in\F_{q^\ell}\}\le \F_{q^\ell}^2. \] Interpreting these as \(\F_q\)-subspaces of \(\F_q^{2\ell}\), set \[ \mathcal S_{\mathrm{Des}} :=\{\mathcal H_a:a\in\F_{q^\ell}\}\cup\{\mathcal H_\infty\}. \] 
\end{construction} 

\begin{proposition}\label{prop:desarguesian-spread} The family \(\mathcal S_{\mathrm{Des}}\) is a spread of \(\PG(2\ell-1,q)\). 
\end{proposition}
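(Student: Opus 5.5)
To prove that \(\mathcal S_{\mathrm{Des}}\) is a spread, we work in the vector-space formulation of the preceding remark. It suffices to show three things: each \(\mathcal H_a\) and \(\mathcal H_\infty\) is an \(\ell\)-dimensional \(\F_q\)-subspace of \(\F_{q^\ell}^2\cong\F_q^{2\ell}\), their nonzero vectors are pairwise disjoint, and together they cover all of \(\F_{q^\ell}^2\setminus\{0\}\). Projectivizing then gives a partition of the points of \(\PG(2\ell-1,q)\) by \((\ell-1)\)-dimensional projective subspaces.

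For dimension, the map \(x\mapsto(x,ax)\) is an injective \(\F_q\)-linear map \(\F_{q^\ell}\to\F_{q^\ell}^2\), and likewise \(y\mapsto(0,y)\). Since multiplication by \(a\) is \(\F_q\)-linear, both images are \(\F_q\)-subspaces of dimension \([\F_{q^\ell}:\F_q]=\ell\).

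For covering and disjointness, take a nonzero \((x,y)\in\F_{q^\ell}^2\) and split into cases.
\begin{enumerate}[label=(\roman*)]
  \item If \(x=0\), then \(y\neq0\) and the vector lies in \(\mathcal H_\infty\). It lies in no \(\mathcal H_a\), since every element of \(\mathcal H_a\) with first coordinate \(0\) is zero.
  \item If \(x\neq0\), the vector lies in \(\mathcal H_a\) exactly when \(y=ax\), that is, for the unique \(a=yx^{-1}\). It does not lie in \(\mathcal H_\infty\).
\end{enumerate}
So every nonzero vector lies in exactly one member, which is the partition property. As a consistency check, \((q^\ell+1)(q^\ell-1)=q^{2\ell}-1\).

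This also yields the complementarity \(H\oplus H'=\F_q^{2\ell}\) for distinct members, which the main text uses. Two distinct members meet trivially and have dimensions summing to \(2\ell\), so they are complementary.

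No step is a real obstacle. The only point needing care is that the unique \(a\) in case (ii) is determined using division in the field \(\F_{q^\ell}\). That is exactly where the field structure, and not merely the vector-space structure, is used.
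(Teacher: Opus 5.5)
Your proposal is correct and follows essentially the same route as the paper's proof. You verify that each member is an $\ell$-dimensional $\F_q$-subspace, and then settle disjointness and covering through the same case split on whether the first coordinate vanishes, with the unique slope $a=yx^{-1}$. The paper checks pairwise trivial intersections and covering separately rather than via ``exactly one member,'' and it gets the dimension from the members being $1$-dimensional $\F_{q^\ell}$-subspaces instead of from your injective map; neither difference is substantive.
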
 
\begin{proof} Each \(\mathcal H_a\) and \(\mathcal H_\infty\) is a \(1\)-dimensional \(\F_{q^\ell}\)-subspace of \(\F_{q^\ell}^2\), hence an \(\ell\)-dimensional \(\F_q\)-subspace of \(\F_q^{2\ell}\). If \(a,b\in\F_{q^\ell}\) with \(a\neq b\), then \[ (x,ax)=(y,by)\in \mathcal H_a\cap \mathcal H_b \] implies \(x=y\) and \((a-b)x=0\), hence \(x=0\). Thus \(\mathcal H_a\cap \mathcal H_b=\{0\}\). Likewise, \(\mathcal H_a\cap \mathcal H_\infty=\{0\}\) for every \(a\in\F_{q^\ell}\). Finally, let \((u,v)\neq(0,0)\) in \(\F_{q^\ell}^2\). If \(u=0\), then \((u,v)\in \mathcal H_\infty\). If \(u\neq 0\), then \((u,v)\in \mathcal H_{vu^{-1}}\). Hence the nonzero vectors of \(\F_q^{2\ell}\) are partitioned by the members of \(\mathcal S_{\mathrm{Des}}\), so \(\mathcal S_{\mathrm{Des}}\) is a spread. 
\end{proof} 

When \(\ell=2\), the spread elements are projective lines in \(\PG(3,q)\), equivalently, \(2\)-dimensional subspaces of \(\F_q^4\). In this case we also need the standard notions of reguli and regular spreads. 

\begin{definition}
  Let \(\mathcal L\) be a set of lines in \(\PG(3,q)\). A line \(m\) is called
  a \emph{transversal} of \(\mathcal L\) if \(m\) meets every line of
  \(\mathcal L\). A nonempty set \(\mathcal R\) of pairwise skew lines in \(\PG(3,q)\) is
  called a \emph{regulus} if:
  \begin{enumerate}[label=(\arabic*)]
      \item through each point of each line of \(\mathcal R\) there passes a
      transversal of \(\mathcal R\);
      \item through each point of a transversal of \(\mathcal R\) there passes
      a line of \(\mathcal R\).
  \end{enumerate}
  \end{definition}

  \begin{proposition}
    Let \(L_1,L_2,L_3\) be pairwise skew lines in \(\PG(3,q)\). Then there
    exists a unique regulus containing \(L_1,L_2,L_3\), which we denote by
    \(\mathcal R(L_1,L_2,L_3)\).
    \end{proposition}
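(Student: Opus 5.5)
Since $L_1,L_2,L_3$ are pairwise skew, the corresponding $2$-dimensional subspaces $U_1,U_2,U_3\le\F_q^4$ pairwise intersect trivially, so $\F_q^4=U_1\oplus U_2$. I will first reduce to a normal form. Because $U_3$ is complementary to both $U_1$ and $U_2$, it is the graph of a linear isomorphism $\phi:U_1\to U_2$, i.e.\ $U_3=\{u+\phi(u):u\in U_1\}$. Identify $U_1$ and $U_2$ with $\F_q^2$ via a basis $e_1,e_2$ of $U_1$ and the basis $\phi(e_1),\phi(e_2)$ of $U_2$. Then
\[
  U_1=\{(x,0)\},\qquad U_2=\{(0,y)\},\qquad U_3=\{(x,x)\},\qquad x,y\in\F_q^2 .
\]
Because $\GL_4(\F_q)$ acts on lines and preserves incidence, it suffices to prove existence and uniqueness in this normal form.

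\textbf{Existence.} For $\lambda\in\F_q$ set $R_\lambda:=\{(x,\lambda x):x\in\F_q^2\}$, and set $R_\infty:=U_2$. Then $U_1=R_0$ and $U_3=R_1$. These $q+1$ lines are pairwise skew, exactly as in the proof of Proposition~\ref{prop:desarguesian-spread}. For each nonzero $v\in\F_q^2$, the line $T_v:=\{(\alpha v,\beta v):\alpha,\beta\in\F_q\}$ meets $R_\lambda$ in $\langle(v,\lambda v)\rangle$ and meets $R_\infty$ in $\langle(0,v)\rangle$, so $T_v$ is a transversal.

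Condition (1): a point of $R_\lambda$ has the form $\langle(v,\lambda v)\rangle$ and lies on $T_v$; a point of $R_\infty$ has the form $\langle(0,v)\rangle$ and likewise lies on $T_v$.

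Condition (2): I will show that every transversal of $\mathcal R:=\{R_\lambda\}$ is some $T_v$. Once this holds, a point $\langle(\alpha v,\beta v)\rangle$ of $T_v$ lies on $R_{\beta/\alpha}$ if $\alpha\ne0$, and on $R_\infty$ if $\alpha=0$.

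To identify the transversals, let $m$ be a line meeting $R_0$, $R_\infty$ and $R_1$. The line $m$ is not contained in $R_0$, because then it could not meet the skew line $R_\infty$; similarly for the other two. Hence $m$ meets $R_0$ in a single point $\langle(a,0)\rangle$ and $R_\infty$ in a single point $\langle(0,b)\rangle$, and these points are distinct, so $m=\langle(a,0),(0,b)\rangle$. The intersection of $m$ with $R_1$ is then a vector $(\alpha a,\beta b)$ with $\alpha a=\beta b\ne0$. This forces $\langle a\rangle=\langle b\rangle$, and therefore $m=T_a$.

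So every transversal of $\{U_1,U_2,U_3\}$ is a $T_v$, and in particular every transversal of $\mathcal R$ is. This proves existence. It also shows that the transversals of $L_1,L_2,L_3$ are exactly the $q+1$ lines $T_v$, and that these lines are pairwise skew.

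\textbf{Uniqueness.} Let $\mathcal R'$ be any regulus containing $U_1,U_2,U_3$. Every transversal of $\mathcal R'$ is in particular a transversal of $\{U_1,U_2,U_3\}$, hence is one of the $T_v$.

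By condition (1), through each point $\langle(v,0)\rangle$ of $U_1$ there passes a transversal of $\mathcal R'$, and this transversal can only be $T_v$. So every $T_v$ is a transversal of $\mathcal R'$.

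By condition (2), each point of each $T_v$ lies on some line $L'$ of $\mathcal R'$. Such an $L'$ meets every $T_v$, and it is not contained in any $T_v$, since then it would fail to be skew to $U_1$ or $U_2$. Covering the points of $R_\lambda$, whose points all lie on distinct $T_v$'s, I will show that every $R_\lambda$ belongs to $\mathcal R'$. The tool is the same two-point argument as before, with the roles of lines and transversals swapped: a line meeting three of the $T_v$ is one of the $R_\lambda$.

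Conversely, every line of $\mathcal R'$ meets all $T_v$ and is therefore some $R_\lambda$. This gives $\mathcal R'=\mathcal R$.

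\textbf{Main obstacle.} The main obstacle is this uniqueness step, specifically showing that a line meeting three of the $T_v$ is one of the $R_\lambda$. I plan to handle it by the dual normal-form computation: in coordinates, $T_v$ corresponds to $\langle v\rangle\otimes\F_q^2$ and $R_\lambda$ corresponds to $\F_q^2\otimes\langle(1,\lambda)\rangle$. The symmetric tensor picture $\F_q^2\otimes\F_q^2$ then makes the roles of the two families interchangeable, so the two-point argument applies verbatim.
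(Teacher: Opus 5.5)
Your proof is correct. It differs from the paper, which gives no argument for this proposition and simply cites \cite[Theorem~2.4.3]{beutelspacher1998projective}.

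You give a self-contained coordinate proof instead:
\begin{enumerate}[label=(\roman*)]
  \item The normalization to $L_1,L_2,L_3=R_0,R_\infty,R_1$ is valid. It uses that $U_3$ is the graph of an isomorphism $U_1\to U_2$.
  \item Your explicit computation that every line meeting $R_0,R_\infty,R_1$ is some $T_v$ does two jobs. It gives condition~(2) for existence, and it pins down the transversals of any regulus $\mathcal R'$ containing $L_1,L_2,L_3$.
  \item The identification $\F_q^4\cong\F_q^2\otimes\F_q^2$, with $T_v=\langle v\rangle\otimes\F_q^2$ and $R_\lambda=\F_q^2\otimes\langle(1,\lambda)\rangle$, makes the swap $x\otimes y\mapsto y\otimes x$ a linear automorphism exchanging the two families. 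This is exactly the duality your uniqueness step needs.
\end{enumerate}

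One point of precision. Transported verbatim, your two-point argument shows only that a line meeting the three specific lines $T_{(1,0)},T_{(0,1)},T_{(1,1)}$ is some $R_\lambda$; these are the swap images of $R_0,R_\infty,R_1$. For three arbitrary $T_v$ you would need either the computation with general parameters or the $3$-transitivity of $\mathrm{PGL}_2(\F_q)$ acting on the second tensor factor. In your application this does not matter: each line $L'\in\mathcal R'$ meets every $T_v$, so those three specific ones suffice. The point $\langle(v,\lambda v)\rangle$ on $L'$ then forces $L'=R_\lambda$ by pairwise skewness.

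Compared with the citation, your route gives transparency and some explicit by-products. A regulus through three skew lines has exactly $q+1$ lines. Its transversals form the opposite family $\{T_v\}$ of $q+1$ pairwise skew lines. Corollary~\ref{cor:reguli-intersection}, which the paper states without proof, follows at once from your uniqueness statement. Your argument also never uses finiteness, so it works over any field. The citation mainly buys brevity.
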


\begin{proof}
  See \cite[Theorem~2.4.3]{beutelspacher1998projective}.
\end{proof}

\begin{corollary}\label{cor:reguli-intersection} Two distinct reguli in \(\PG(3,q)\) have at most two lines in common. 
\end{corollary}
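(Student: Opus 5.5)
The plan is to derive the corollary directly from the preceding proposition on uniqueness of the regulus through three pairwise skew lines. Let $\mathcal R$ and $\mathcal R'$ be two distinct reguli in $\PG(3,q)$, and suppose, for contradiction, that they share at least three lines $L_1,L_2,L_3$. Since a regulus is by definition a set of pairwise skew lines, $L_1,L_2,L_3$ are pairwise skew.

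By the proposition, there is a unique regulus $\mathcal R(L_1,L_2,L_3)$ containing all three. Both $\mathcal R$ and $\mathcal R'$ contain $L_1,L_2,L_3$. Uniqueness therefore forces
\[
  \mathcal R=\mathcal R(L_1,L_2,L_3)=\mathcal R'.
\]
This contradicts the assumption that $\mathcal R$ and $\mathcal R'$ are distinct. Hence $|\mathcal R\cap\mathcal R'|\le 2$.

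The only point requiring care is that the uniqueness statement is used in the strong sense: any regulus containing the three lines must equal $\mathcal R(L_1,L_2,L_3)$. This is exactly how the cited result (Beutelspacher--Rosenbaum, Theorem 2.4.3) is formulated, so I expect no real obstacle. I would also note that the shared lines are automatically distinct members of a single regulus, so pairwise skewness needs no separate verification.
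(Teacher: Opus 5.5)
Your proposal is correct and matches the paper's approach: the paper states this as an immediate corollary (without writing out a proof) of the uniqueness of the regulus $\mathcal R(L_1,L_2,L_3)$. Your argument supplies that one-step derivation: three common lines are pairwise skew because a regulus consists of pairwise skew lines, so uniqueness forces the two reguli to coincide.
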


\begin{definition}
  A spread \(\mathcal S\) of \(\PG(3,q)\) is called \emph{regular} if for every three distinct lines \(L_1,L_2,L_3\in\mathcal S\), one has \[ \mathcal R(L_1,L_2,L_3)\subseteq \mathcal S. \] \end{definition} 

  \begin{proposition}\label{prop:regular-spread-equivalent}
    A spread of \(\PG(3,q)\) is regular if and only if it is projectively
    equivalent to the spread \(\mathcal S_{\mathrm{Des}}\) with \(\ell=2\).
    \end{proposition}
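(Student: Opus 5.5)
We must prove Proposition~\ref{prop:regular-spread-equivalent}: a spread of \(\PG(3,q)\) is regular if and only if it is projectively equivalent to \(\mathcal S_{\mathrm{Des}}\) with \(\ell=2\). The plan has two directions of unequal difficulty.

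\emph{Easy direction.} Regularity is defined purely in terms of lines, skewness and reguli, all of which are preserved by collineations. So it suffices to show that \(\mathcal S_{\mathrm{Des}}\) itself is regular. Take three distinct elements of \(\mathcal S_{\mathrm{Des}}\). Since \(\GL(2,q^2)\) acts \(3\)-transitively on the points of \(\PG(1,q^2)\), i.e.\ on \(\mathcal S_{\mathrm{Des}}\), and preserves the spread, we may assume the three lines are \(\mathcal H_0,\mathcal H_1,\mathcal H_\infty\).

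We then exhibit the regulus explicitly. For each nonzero \(u\in\F_{q^2}\), the \(\F_q\)-span \(\{(\lambda u,\mu u):\lambda,\mu\in\F_q\}\) is a line meeting \(\mathcal H_0,\mathcal H_\infty\) and every \(\mathcal H_a\) with \(a\in\F_q\). These lines form the transversals, and the lines \(\mathcal H_a\), \(a\in\F_q\cup\{\infty\}\), satisfy both axioms of a regulus. By uniqueness, \(\mathcal R(\mathcal H_0,\mathcal H_1,\mathcal H_\infty)=\{\mathcal H_a:a\in\F_q\cup\{\infty\}\}\subseteq\mathcal S_{\mathrm{Des}}\).

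\emph{Hard direction.} Let \(\mathcal S\) be a regular spread. The route I would follow is the classical Bruck--Bose/André correspondence: a spread \(\mathcal S\) of \(\PG(3,q)\) yields a translation plane \(\pi(\mathcal S)\) of order \(q^2\), and isomorphic planes correspond to projectively equivalent spreads.

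The key claim is that regularity of \(\mathcal S\) forces \(\pi(\mathcal S)\) to be Desarguesian. The strategy is to coordinatize as follows.

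First, fix three lines \(L_0,L_\infty,L_1\in\mathcal S\) and choose coordinates so that \(L_0=\mathcal H_0\), \(L_\infty=\mathcal H_\infty\), \(L_1=\mathcal H_1\) inside \(\F_q^4=\F_q^2\times\F_q^2\). This is possible because \(\GL(4,q)\) is transitive on triples of pairwise skew lines.

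Second, every other spread element then has the form \(\{(x,Ax)\}\) for a matrix \(A\in\F_q^{2\times2}\), and the set \(\mathcal K\) of such matrices (together with \(0\)) is a spread set. This means \(|\mathcal K|=q^2\), and differences of distinct members are invertible.

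Third, by regularity, the regulus through \(L_0,L_1,L_\infty\) lies in \(\mathcal S\), which forces the scalars \(\lambda I\), \(\lambda\in\F_q\), to lie in \(\mathcal K\). Applying regularity to triples \(L_0,L_\infty,\{(x,Ax)\}\) shows \(\{\lambda A\}\subseteq\mathcal K\). Applying it to triples \(L_\infty,\{(x,Ax)\},\{(x,Bx)\}\) shows closure of \(\mathcal K\) under the relevant affine combinations. Together these should give that \(\mathcal K\) is closed under addition.

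Fourth, I would further use triples involving \(L_0\) to obtain multiplicative closure. This shows that \(\mathcal K\) is a field of matrices of order \(q^2\), hence isomorphic to \(\F_{q^2}\) acting on \(\F_q^2\cong\F_{q^2}\). A change of basis then identifies \(\mathcal S\) with \(\mathcal S_{\mathrm{Des}}\).

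The main obstacle is Steps three and four: deriving additive and multiplicative closure of the spread set from regularity. That argument is lengthy. Since the paper treats this appendix as background, I would in fact cite the standard reference (e.g.\ \cite{beutelspacher1998projective}, or Bruck's theorem that regular spreads correspond exactly to Desarguesian planes) for this direction, rather than redo the computation.
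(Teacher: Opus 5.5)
Your proposal is correct. It differs from the paper mainly in how much it actually proves. The paper's proof is only a citation to \cite[Theorem~4.128]{HT16}. You instead give a self-contained argument for the easy direction: collineation-invariance of regularity via uniqueness of $\mathcal R(L_1,L_2,L_3)$, $3$-transitivity of $\mathrm{PGL}(2,q^2)$ on $\mathcal S_{\mathrm{Des}}$, and the explicit regulus $\{\mathcal H_a:a\in\F_q\cup\{\infty\}\}$ with transversals $\{(\lambda u,\mu u):\lambda,\mu\in\F_q\}$. This makes Corollary~\ref{cor:desarguesian-regular} independent of the external reference.

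To close that direction fully, note that axiom (2) quantifies over \emph{all} transversals. You should say why your lines are the only ones: through a point of $\mathcal H_0$ there is at most one line meeting the skew lines $\mathcal H_1,\mathcal H_\infty$. This is exactly the fact used in the proof of Proposition~\ref{prop:transversal-regulus}.

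For the hard direction you fall back on a citation, as the paper does. For $q\ge 3$, though, your spread-set sketch is nearer to complete than you fear. Pick $t\notin\{0,1\}$; the affine line through $(1-t)^{-1}A$ and $t^{-1}B$ (both in $\mathcal K$ by the scalar step) contains $A+B$. So $\mathcal K$ is a $2$-dimensional subspace containing $I$. Then $\mathcal K=\F_q I+\F_q A=\F_q[A]$ is multiplicatively closed for free by Cayley--Hamilton, which makes your Step four unnecessary.

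For $q=2$, however, a regulus has only three lines, so regularity is vacuous and Step three yields nothing. There you must check directly that $\mathcal K=\{0,I,A,A+I\}$ with $A^2+A+I=0$, i.e.\ $\mathcal K\cong\F_4$.

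Your aside that isomorphic translation planes give projectively equivalent spreads is slightly off. Andr\'e's correspondence gives equivalence under $\Gamma\mathrm{L}(4,q)$, not necessarily $\GL(4,q)$. This is harmless, since your actual argument never uses it.
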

    \begin{proof}
      See \cite[Theorem~4.128]{HT16}
    \end{proof}

\begin{corollary}\label{cor:desarguesian-regular} When \(\ell=2\), the spread \(\mathcal S_{\mathrm{Des}}\) is regular. \end{corollary}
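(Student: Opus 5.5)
The plan is to deduce the corollary directly from Proposition~\ref{prop:regular-spread-equivalent}, using its ``if'' direction. That proposition says a spread of \(\PG(3,q)\) is regular exactly when it is projectively equivalent to \(\mathcal S_{\mathrm{Des}}\) with \(\ell=2\). By Proposition~\ref{prop:desarguesian-spread}, \(\mathcal S_{\mathrm{Des}}\) is itself a spread of \(\PG(3,q)\). It is also projectively equivalent to itself via the identity collineation. Hence it is regular. This is the whole argument; its only substance is the cited Theorem~4.128 of \cite{HT16}.

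If one prefers not to lean on the ``if'' direction of a cited characterization, I would verify regularity directly in three steps.

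First, reduce to a standard triple of lines. The group \(\GL(2,q^2)\) acts \(\F_q\)-linearly on \(\F_{q^2}^2=\F_q^4\). It permutes the \(\F_{q^2}\)-lines, which are exactly the members of \(\mathcal S_{\mathrm{Des}}\). It acts on them as \(\mathrm{PGL}(2,q^2)\) acts on \(\F_{q^2}\cup\{\infty\}\), which is sharply \(3\)-transitive. These maps are collineations of \(\PG(3,q)\), so they send reguli to reguli. It therefore suffices to treat the triple \(\mathcal H_0,\mathcal H_\infty,\mathcal H_1\).

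Second, exhibit the regulus through this triple. Let \(\mathcal R:=\{\mathcal H_a:a\in\F_q\}\cup\{\mathcal H_\infty\}\). For each \(0\neq u\in\F_{q^2}\), set \(m_u:=\{(\lambda u,\mu u):\lambda,\mu\in\F_q\}\), which is a line of \(\PG(3,q)\). I would then check three facts:
\begin{enumerate}[label=(\roman*)]
  \item \(m_u\) meets \(\mathcal H_a\) for \(a\in\F_q\) in the point \(\langle(u,au)\rangle\), and meets \(\mathcal H_\infty\) in \(\langle(0,u)\rangle\).
  \item Every point \(\langle(x,ax)\rangle\) of \(\mathcal H_a\) lies on the transversal \(m_x\), and every point \(\langle(0,y)\rangle\) of \(\mathcal H_\infty\) lies on \(m_y\).
  \item Every point \(\langle(\lambda u,\mu u)\rangle\) of \(m_u\) lies on \(\mathcal H_{\mu/\lambda}\) if \(\lambda\neq0\), and on \(\mathcal H_\infty\) otherwise.
\end{enumerate}
The lines of \(\mathcal R\) are pairwise skew, so these checks show \(\mathcal R\) is a regulus containing the triple. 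By the uniqueness in the proposition on \(\mathcal R(L_1,L_2,L_3)\), we get \(\mathcal R(\mathcal H_0,\mathcal H_\infty,\mathcal H_1)=\mathcal R\subseteq\mathcal S_{\mathrm{Des}}\).

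Third, transport back. The image of \(\mathcal R\) under the chosen element of \(\GL(2,q^2)\) is a regulus contained in \(\mathcal S_{\mathrm{Des}}\) through the original three lines, which proves regularity.

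The only step needing care is the reduction in the first step, namely that \(\F_{q^2}\)-linear maps are \(\F_q\)-collineations preserving \(\mathcal S_{\mathrm{Des}}\). This is routine.
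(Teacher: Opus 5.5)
Your main argument is exactly the paper's. The corollary is stated without proof immediately after Proposition~\ref{prop:regular-spread-equivalent}, and it is just that proposition's ``if'' direction applied to $\mathcal S_{\mathrm{Des}}$ (a spread by Proposition~\ref{prop:desarguesian-spread}) with the identity equivalence.

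In your optional direct verification, condition (2) of a regulus quantifies over \emph{all} transversals of $\mathcal R$, so you should also note that every transversal is some $m_u$. This is immediate: a transversal meeting $\mathcal H_0$ in $\langle(x,0)\rangle$ and $\mathcal H_\infty$ in $\langle(0,y)\rangle$ is $\{(\lambda x,\mu y):\lambda,\mu\in\F_q\}$. It meets $\mathcal H_1$ only if $y\in\F_q^\times x$, in which case it equals $m_x$.
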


The following proposition will be used in the proof of Proposition~\ref{prop:regular-spread-necessary-length}.

 \begin{proposition}\label{prop:transversal-regulus}
  Let \(\mathcal S\) be a regular spread of \(\PG(3,q)\), and let \(m\) be a line not in \(\mathcal S\). Then
  \[
    R(m):=\{L\in\mathcal S:L\cap m\neq\emptyset\}
  \]
  is a regulus contained in \(\mathcal S\). In particular, \(|R(m)|=q+1\).
\end{proposition}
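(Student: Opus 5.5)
By Proposition~\ref{prop:regular-spread-equivalent}, there is a projective automorphism of $\PG(3,q)$ carrying $\mathcal S$ onto the standard Desarguesian spread $\mathcal S_{\mathrm{Des}}$ with $\ell=2$. This automorphism maps lines to lines and preserves incidence. It therefore carries the set $R(m)$ onto the corresponding set for the image line, and it maps reguli to reguli. So it suffices to prove the proposition for $\mathcal S=\mathcal S_{\mathrm{Des}}$, working in $\mathbb V=\F_{q^2}^2$ regarded as $\F_q^4$.

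\textbf{Step 1: $|R(m)|=q+1$ and the lines are pairwise skew.} The line $m$ has $q+1$ points. Each point lies on exactly one spread line, since $\mathcal S$ partitions the points. No spread line can contain two points of $m$: it would then contain $m$, forcing $m\in\mathcal S$, which is excluded. Hence the map sending a point of $m$ to its spread line is injective, and $|R(m)|=q+1$. These lines are pairwise skew because they are spread lines.

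\textbf{Step 2: $R(m)$ is a regulus.} Pick three lines $L_1,L_2,L_3\in R(m)$; this is possible because $q+1\ge 3$. By the regularity of $\mathcal S_{\mathrm{Des}}$ (Corollary~\ref{cor:desarguesian-regular}), the regulus $\mathcal R:=\mathcal R(L_1,L_2,L_3)$ is contained in $\mathcal S$. The goal is to show $R(m)=\mathcal R$. Both sets have size $q+1$, so it is enough to show that $m$ is a transversal of $\mathcal R$, because then $\mathcal R\subseteq R(m)$.

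This is the main obstacle, since $m$ is only known to meet $L_1,L_2,L_3$. I would use the classical fact that every line meeting three pairwise skew lines of a regulus is a transversal of the whole regulus. This holds because the transversals of $L_1,L_2,L_3$ are exactly the lines of the opposite regulus, namely the lines of the hyperbolic quadric through $L_1,L_2,L_3$ belonging to the other ruling. I would cite this from \cite{beutelspacher1998projective} or \cite{HT16}.

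As a self-contained alternative, I would check it by coordinates. Through a point $P\in L_1$ there is exactly one line meeting both $L_2$ and $L_3$: it is the intersection of the planes $\langle P,L_2\rangle$ and $\langle P,L_3\rangle$. By condition (1) in the definition of a regulus, the transversal of $\mathcal R$ through $P$ is such a line. Now $m$ passes through the point $m\cap L_1$ and meets $L_2$ and $L_3$, so $m$ is that unique line, and hence $m$ is a transversal of $\mathcal R$. This uniqueness argument needs only that the $L_i$ are pairwise skew, so it avoids any quadric machinery. It is the cleanest route, and I would present it in this form.

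\textbf{Conclusion.} Since $m$ is a transversal of $\mathcal R$, every line of $\mathcal R$ meets $m$, so $\mathcal R\subseteq R(m)$. A regulus in $\PG(3,q)$ has $q+1$ lines: its transversal $m$ has $q+1$ points, and by condition (2) each point lies on a line of $\mathcal R$, while distinct lines of $\mathcal R$ are skew and so meet $m$ in distinct points. Comparing with Step 1 gives equality, $R(m)=\mathcal R$. Therefore $R(m)$ is a regulus contained in $\mathcal S$ with $|R(m)|=q+1$.
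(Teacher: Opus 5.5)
Your proof is correct and follows essentially the paper's argument: the injection from points of \(m\) to spread lines gives \(|R(m)|=q+1\), regularity puts \(\mathcal R(L_1,L_2,L_3)\) inside \(\mathcal S\), and the fact that at most one line through a point meets two given skew lines forces \(m\) to be a transversal of that regulus. The only differences are cosmetic. Your preliminary reduction to \(\mathcal S_{\mathrm{Des}}\) is never used. At the end, you compare cardinalities, whereas the paper gets \(R(m)\subseteq\mathcal R\) directly: by condition (2), the line of \(\mathcal R\) through each \(P\in m\) must be the unique spread line through \(P\).
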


\begin{proof}
  Since \(\mathcal S\) is a spread and \(m\notin\mathcal S\), each point
  \(P\in m\) lies on a unique line \(L_P\in\mathcal S\), and distinct
  points of \(m\) determine distinct lines of \(\mathcal S\). Hence
  \[
    R(m)=\{L_P:P\in m\},
  \]
  and in particular \(|R(m)|=q+1\).

  Choose three distinct points \(P_1,P_2,P_3\in m\), and set
  \(L_i:=L_{P_i}\in R(m)\) for \(i=1,2,3\). Since \(\mathcal S\) is a spread,
  the lines \(L_1,L_2,L_3\) are pairwise skew. As \(\mathcal S\) is regular, the unique regulus
  \[
    \mathcal R:=\mathcal R(L_1,L_2,L_3)
  \]
  is contained in \(\mathcal S\).

  By the definition of a regulus, there is a transversal \(t\) of
  \(\mathcal R\) through \(P_1\). Then \(t\) meets \(L_2\) and \(L_3\). As
  \(m\) also passes through \(P_1\) and meets \(L_2\) and \(L_3\), and through
  the fixed point \(P_1\) there is at most one line meeting both skew lines
  \(L_2\) and \(L_3\), we obtain \(t=m\). Thus \(m\) is a transversal of
  \(\mathcal R\).

  Now every line of \(\mathcal R\) meets \(m\), so \(\mathcal R\subseteq R(m)\).
  Conversely, let \(P\in m\). Since \(m\) is a transversal of \(\mathcal R\),
  there passes a line of \(\mathcal R\) through \(P\). As \(\mathcal R\subseteq
  \mathcal S\) and \(\mathcal S\) is a spread, this line must be the unique
  spread line \(L_P\) through \(P\). Hence \(L_P\in\mathcal R\), and so
  \(R(m)\subseteq\mathcal R\). Therefore
  \[
    R(m)=\mathcal R,
  \]
  and \(R(m)\) is a regulus contained in \(\mathcal S\).
\end{proof}

\subsection{The Remaining Exceptional Cases in Theorem~\ref{thm:two-parity-attainability-general}} \label{app:exceptional-cases} 
It remains to prove Theorem~\ref{thm:two-parity-attainability-general} in the three cases \[ (q,\ell,n)\in\{(3,2,6),(3,2,7),(4,2,9)\}. \] Throughout this appendix we set \(r=\ell=2\), so \[ t:=t_2(q)=\frac{q^2-1}{q-1}=q+1, \] and identify \(\mathbb{V}=\F_q^4\) with \(\F_{q^2}^2\) as \(\F_q\)-vector spaces.

It is convenient to work with the \emph{conjugate} standard Desarguesian spread \[ \widetilde{\mathcal{H}}_s:=\{(sx,x):x\in\F_{q^2}\}\le \mathbb{V} \qquad (s\in\F_{q^2}), \qquad \widetilde{\mathcal{H}}_\infty:=\{(x,0):x\in\F_{q^2}\}\le \mathbb{V}, \] indexed by \(\mathbb{P}^1(\F_{q^2})=\F_{q^2}\cup\{\infty\}\). For a \(2\)-dimensional \(\F_q\)-subspace \(W\le \mathbb{V}\), define its hit set by
\[
  B(W):=\{\,s\in\mathbb{P}^1(\F_{q^2}): W\cap \widetilde{\mathcal{H}}_s\neq\{0\}\,\}.
\]
Since both \(W\) and \(\widetilde{\mathcal{H}}_s\) are \(2\)-dimensional, one has
\(\dim(W\cap \widetilde{\mathcal{H}}_s)\in\{0,1,2\}\), and the value \(2\) occurs precisely when
\(W=\widetilde{\mathcal{H}}_s\). In particular, if \(W\) is not a spread element, then every nonzero
intersection \(W\cap \widetilde{\mathcal{H}}_s\) is \(1\)-dimensional. 

Hence, if \(W\) is not a spread element, then for every subset
\(\Omega\subseteq \mathbb{P}^1(\F_{q^2})\), every \(s_0\in \Omega\) with
\(W\cap \widetilde{\mathcal{H}}_{s_0}=\{0\}\), and every choice of projective column sets
\(X_s\subseteq \mathbb{P}(\widetilde{\mathcal{H}}_s)\), one has
\[
  \sum_{\substack{s\in\Omega\\ s\ne s_0}} \dim(W\cap \widetilde{\mathcal{H}}_s)
  =
  |B(W)\cap(\Omega\setminus\{s_0\})|,
\]
and
\[
  \sum_{\substack{s\in\Omega\\ s\ne s_0}} z_s(W)
  =
  \sum_{\substack{s\in\Omega\\ s\ne s_0}} |X_s\cap \mathbb{P}(W)|.
\]

Let \(W^{(1)}:=\F_q^2\le \F_{q^2}^2\). Then \(B(W^{(1)})=\mathbb{P}^1(\F_q)\). More generally, for \(g\in \GL_2(\F_{q^2})\), write \(W^{(g)}:=g(W^{(1)})\). Since \(g\in\GL_2(\F_{q^2})\) sends each spread element \(\widetilde{\mathcal H}_s\) to
\(\widetilde{\mathcal H}_{g\cdot s}\), where \(g\cdot s\) is the usual fractional linear action on
\(\mathbb P^1(\F_{q^2})\), one has
\[
  B(W^{(g)})=g\cdot B(W^{(1)})=g\cdot \mathbb P^1(\F_q).
\]

\begin{proposition}\label{prop:exceptional-q3} For \(q=3\) and \(n\in\{6,7\}\), there exists an \((n,n-2,2)\) MDS array code \(\mathcal{C}\) over \(\F_3\) such that \[ \beta_{\avg}(\mathcal{C})=\beta_{\max}(\mathcal{C}) =\gamma_{\avg}(\mathcal{C})=\gamma_{\max}(\mathcal{C}) =2(n-1)-4. \] \end{proposition}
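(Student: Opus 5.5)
The approach is to reuse the Baer-subline machinery set up just above the statement, with $q=3$, $t=q+1=4$ and $\mathbb{P}^1(\F_9)$ having $10$ points. By the displayed identities preceding Proposition~\ref{prop:exceptional-q3}, it suffices to do three things. First, choose $\Omega\subseteq\mathbb{P}^1(\F_9)$ with $|\Omega|=n$ and take node subspaces $\widetilde{\mathcal H}_s$ for $s\in\Omega$; these are pairwise complementary, so they give an MDS code. Second, for each $s_0\in\Omega$, exhibit $g\in\GL_2(\F_9)$ with
\[
B(W^{(g)})\subseteq\Omega\setminus\{s_0\}.
\]
Since $W^{(g)}$ is not a spread element, this gives $\sum_{s\ne s_0}\dim(W^{(g)}\cap\widetilde{\mathcal H}_s)=|B(W^{(g)})|=4$, so $\alpha_{s_0}=4$ by Lemma~\ref{lem:projective-counting-bound}. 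Third, choose the column sets $X_s$ so that every hit point is a column point. That last step makes $\lambda_{s_0}\ge 4$, and then $\gamma_{s_0}=\beta_{s_0}=2(n-1)-4$ exactly as in the proof of Theorem~\ref{thm:two-parity-attainability-general}.

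The combinatorial core is a configuration of three Baer sublines $B_1,B_2,B_3$ (images $g\cdot\mathbb{P}^1(\F_3)$, i.e.\ circles of the inversive plane of order $3$) inside a $6$-set $\Omega$. Each has size $4$, the pairwise intersections $B_a\cap B_b$ have size $2$ and are pairwise disjoint, and $B_1\cap B_2\cap B_3=\emptyset$. Counting incidences, $3\cdot 4=12=6\cdot 2$, so every point of $\Omega$ lies in exactly two of the circles and is avoided by the third. That third circle is the repair subspace for that node, so every node $s_0$ has a circle inside $\Omega\setminus\{s_0\}$.

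To construct it, I will take $B_1=\mathbb{P}^1(\F_3)=\{0,1,2,\infty\}$. Then I pick a circle $B_2$ meeting $B_1$ in exactly two points, say $B_2=g\cdot\mathbb{P}^1(\F_3)$ for a suitable explicit $g$, for instance a circle through $0,\infty$, which has the form $\{0,\infty\}\cup\{c,-c\}$ up to scaling, i.e.\ $c\,\F_3^\times\cup\{0,\infty\}$ with $c\notin\F_3$. Finally I need $B_3:=(B_1\setminus B_2)\cup(B_2\setminus B_1)$ to be a circle, which amounts to checking that these four points have cross-ratio in $\F_3$. This verification, and choosing $c$ so that it succeeds (using the freedom in $c\in\F_9\setminus\F_3$ and the choice of which two points of $B_1$ to share), is the main obstacle. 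I would try $B_2=\{0,\infty,i,-i\}$ with $i^2=-1$, so that $B_3=\{1,2,i,-i\}$. For this set I expect the cross-ratio to be $(1-i)^2/(1+i)^2=-1\in\F_3$ or similar, which would make $B_3$ a circle. I will compute it explicitly, and if it fails, adjust $c$.

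For $n=7$ I will add an arbitrary seventh point $s^\ast$ to $\Omega$; it lies in no $B_a$, and any of the three circles repairs it. For column sets, each $s\in\Omega$ lies in at most two chosen circles $B(W^{(g_a)})$. For each such circle, $\mathbb{P}(W^{(g_a)}\cap\widetilde{\mathcal H}_s)$ is one point. If the two points are distinct, they span $\widetilde{\mathcal H}_s$ and form $X_s$. If they coincide, or if $s$ lies in fewer than two circles, I complete to a spanning pair of distinct points. Then $z_s(W)=1$ for every hit node, so $\lambda_{s_0}\ge 4$, and the converse statements of Section~\ref{sec:intrinsic} realize the code.
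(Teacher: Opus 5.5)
Your proposal is correct and follows the paper's proof. Both use three Baer sublines of $\mathbb{P}^1(\F_9)$ that pairwise meet in two points and have empty triple intersection, take their $6$-point union as $\Omega$ (adding one outside point for $n=7$), and choose column sets containing the at most two hit points per node.

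Your configuration $\{0,1,2,\infty\}$, $\{0,\infty,\pm i\}$, $\{\pm 1,\pm i\}$ differs from the paper's, which uses $\{\infty,1,1\pm\omega\}$ and $\{0,2,1\pm\omega\}$ with explicit matrices $g_2,g_3$. It does work, and the check you left pending succeeds: $(1-i)^2=i$ and $(1+i)^2=-i$, so the cross-ratio is $-1\in\F_3$. Concretely, $g=\begin{pmatrix}1&i\\1&-i\end{pmatrix}$ maps $\mathbb{P}^1(\F_3)$ onto $\{\pm1,\pm i\}$.
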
 \begin{proof} Fix \(\F_9=\F_3(\omega)\) with \(\omega^2=-1\), and consider the following three subsets of \(\mathbb P^1(\F_9)\): \[ B_1=\{\infty,0,1,2\},\qquad B_2=\{\infty,1,1+\omega,1-\omega\},\qquad B_3=\{0,2,1+\omega,1-\omega\}. \] Let \(W^{(1)}:=\F_3^2\), and define \[ g_2=\begin{pmatrix}\omega&1\\0&1\end{pmatrix}, \qquad g_3=\begin{pmatrix}0&-\omega\\1&\omega\end{pmatrix}, \qquad W^{(2)}:=W^{(g_2)},\quad W^{(3)}:=W^{(g_3)}. \] A direct computation shows that \(B(W^{(i)})=B_i\) for \(i=1,2,3\).
  
  For $n=6$, let \[ \Omega_6:=B_1\cup B_2\cup B_3 =\{\infty,0,1,2,1+\omega,1-\omega\}. \] Since \(B_1\cap B_2\cap B_3=\emptyset\), for each \(s\in\Omega_6\) we may choose \(r(s)\in\{1,2,3\}\) such that \(s\notin B_{r(s)}\), and set \(W_s:=W^{(r(s))}\). Then \(W_s\cap \widetilde{\mathcal{H}}_s=\{0\}\), while \[ \sum_{\substack{u\in\Omega_6\\ u\ne s}} \dim(W_s\cap \widetilde{\mathcal{H}}_u) = |B_{r(s)}| = 4 = t \] for every \(s\in\Omega_6\). Hence \(\alpha_i\ge t=4\) for all six selected node subspaces. By
  Lemma~\ref{lem:projective-counting-bound}, one also has \(\alpha_i\le t\), and therefore
  \(\alpha_i=t=4\).
  
  Moreover, each point of \(\Omega_6\) lies in at most two of \(B_1,B_2,B_3\). Therefore, for each \(s\in\Omega_6\), the set \[ Y_s:= \{\,\mathbb{P}(W_u\cap \widetilde{\mathcal{H}}_s): u\in\Omega_6,\ u\ne s,\ W_u\cap \widetilde{\mathcal{H}}_s\ne\{0\}\,\} \subseteq \mathbb{P}(\widetilde{\mathcal{H}}_s) \] has size at most \(2=\ell\). Choose \(X_s\subseteq \mathbb{P}(\widetilde{\mathcal{H}}_s)\) to be any set of two distinct points spanning \(\widetilde{\mathcal{H}}_s\) and containing \(Y_s\). By the converse statements in Section~\ref{sec:intrinsic}, the data \(\bigl(\widetilde{\mathcal{H}}_s,X_s\bigr)_{s\in\Omega_6}\) are realized by a \((6,4,2)\) MDS array code \(\mathcal{C}\). For every failed node \(s\in\Omega_6\), the choice \(X_u\supseteq Y_u\) ensures that
  \(\sum_{u\ne s} z_u(W_s)=4=t\). Hence \[ \beta_{\avg}(\mathcal{C})=\beta_{\max}(\mathcal{C}) =\gamma_{\avg}(\mathcal{C})=\gamma_{\max}(\mathcal{C}) =2(6-1)-4. \] 
  
  For \(n=7\), choose any \(s_\ast\in \mathbb{P}^1(\F_9)\setminus \Omega_6\), and set
\(\Omega_7:=\Omega_6\cup\{s_\ast\}\). Since \(|\mathbb{P}^1(\F_9)|=10\) and \(|\Omega_6|=6\), such a choice is possible; for instance, \(s_\ast=\omega\). Keep the same \(W_s\) for \(s\in\Omega_6\), and set \(W_{s_\ast}:=W^{(1)}\). Then \(W_{s_\ast}\cap \widetilde{\mathcal{H}}_{s_\ast}=\{0\}\), and still \[ \sum_{\substack{u\in\Omega_7\\ u\ne s}} \dim(W_s\cap \widetilde{\mathcal{H}}_u) =4=t \] for every \(s\in\Omega_7\). Since \(s_\ast\) lies in none of \(B_1,B_2,B_3\), the same argument as above gives \(|Y_s|\le 2\) for all \(s\in\Omega_7\). Hence the same intrinsic argument yields a \((7,5,2)\) MDS array code \(\mathcal{C}\), and again
\(\sum_{u\ne s} z_u(W_s)=4=t\) for every failed node \(s\in\Omega_7\). By the same argument as in the case \(n=6\), we obtain \(\alpha_i=t=4\) for all selected nodes, and
hence
\[
  \beta_{\avg}(\mathcal{C})=\beta_{\max}(\mathcal{C})
  =\gamma_{\avg}(\mathcal{C})=\gamma_{\max}(\mathcal{C})
  =2(7-1)-4.
\]\end{proof}

  \begin{proposition}\label{prop:exceptional-q4} For \(q=4\) and \(n=9\), there exists a \((9,7,2)\) MDS array code \(\mathcal{C}\) over \(\F_4\) such that \[ \beta_{\avg}(\mathcal{C})=\beta_{\max}(\mathcal{C}) =\gamma_{\avg}(\mathcal{C})=\gamma_{\max}(\mathcal{C}) =2(9-1)-5. \] 
  \end{proposition}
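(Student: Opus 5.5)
The plan is to repeat the construction of Proposition~\ref{prop:exceptional-q3}: choose a few Baer sublines $B(W^{(g)})=g\cdot\mathbb P^1(\F_4)$ of $\mathbb P^1(\F_{16})$, each of size $t=5$. Let $\Omega$ be their union and assign to each node $s\in\Omega$ a repair subspace $W_s=W^{(g)}$ whose hit set avoids $s$.

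First I determine what configuration is forced. For every failed node $s$ we need $|B(W_s)\cap(\Omega\setminus\{s\})|=5$. So each hit set used must lie entirely inside $\Omega$ and must not contain $s$. Two distinct Baer sublines share at most two points, by Corollary~\ref{cor:reguli-intersection} transported to $\mathbb P^1(\F_{q^2})$. Two blocks alone give a union of at least $8$ points and a common intersection of one or two points. Those common points still need a third block avoiding them, and with only $9$ points such a block would need too many points from the first two blocks. The natural target is therefore three Baer sublines $B_1,B_2,B_3$ with the following properties:
\[
|B_a\cap B_b|=2 \quad (a\ne b),\qquad B_1\cap B_2\cap B_3=\emptyset .
\]
Inclusion--exclusion then gives $|B_1\cup B_2\cup B_3|=15-6=9$, so $\Omega:=B_1\cup B_2\cup B_3$ has exactly $n=9$ points. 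Moreover every point of $\Omega$ lies in at most two of the $B_a$, which is exactly the condition used for repair I/O.

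For the concrete choice I take $B_1=\mathbb P^1(\F_4)=B(W^{(1)})$. Next I take $g_2=\mathrm{diag}(a,1)$ with $a\in\F_{16}\setminus\F_4$, so that $B_2=\{0,\infty\}\cup a\F_4^\times$ and $B_1\cap B_2=\{0,\infty\}$. For $B_3$ I need a Baer subline avoiding $0$ and $\infty$ that passes through two points of $\{1,\omega,\omega^2\}$ (say $1$ and $\omega$) and through two points of $a\F_4^\times$ (say $a$ and $a\omega$). Its fifth point must lie outside $B_1\cup B_2$.

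Four points lie on a common Baer subline exactly when their cross-ratio lies in $\F_4$. So the step I expect to be the main obstacle is finding $a\in\F_{16}\setminus\F_4$ with $\mathrm{cr}(1,\omega,a,a\omega)\in\F_4$, or with some other pairing of the points if this one fails, and then checking the fifth point. Concretely, I would write $\F_{16}=\F_4(\theta)$ and solve this quadratic condition in $a$ by direct computation. Once $a$ is found, $g_3\in\GL_2(\F_{16})$ is the Möbius map sending $0,1,\infty$ to three of the prescribed points. I then verify $B(W^{(g_3)})=B_3$ by direct computation, as in the $q=3$ case.

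With $B_1,B_2,B_3$ in hand, the rest follows Proposition~\ref{prop:exceptional-q3} verbatim. Since $B_1\cap B_2\cap B_3=\emptyset$, every $s\in\Omega$ avoids some $B_{r(s)}$, and I set $W_s:=W^{(r(s))}$. Then $W_s\cap\widetilde{\mathcal H}_s=\{0\}$ and the helper intersections sum to $|B_{r(s)}|=5$. Together with Lemma~\ref{lem:projective-counting-bound} this gives $\alpha_s=5$ for every node. Each point lies in at most two blocks, so each $Y_s$ has at most $2=\ell$ points, and I can choose $X_s\supseteq Y_s$ spanning $\widetilde{\mathcal H}_s$. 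This gives $\lambda_s\ge5$, hence all four repair parameters equal $2(9-1)-5$.
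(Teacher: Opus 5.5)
Your construction is the paper's. The paper also takes $B_1=\mathbb P^1(\F_4)$, then $g_2=\mathrm{diag}(a,1)$ with $a=\alpha$, then a third Baer subline meeting each of $B_1,B_2$ in two points away from $\{0,\infty\}$. It then finishes exactly as in Proposition~\ref{prop:exceptional-q3}. The one gap is that you never show such a third subline exists. You isolate the cross-ratio condition on $a$ as ``the main obstacle'' and defer it to a computation you do not carry out. For this existence statement that is the only new content; everything afterwards is copied from the $q=3$ case.

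The step closes in a few lines. Using $1+\omega^2=\omega$ in $\F_4$ and characteristic $2$,
\[
  \mathrm{cr}(1,\omega;a,a\omega)=\frac{(1+a)(\omega+a\omega)}{(1+a\omega)(\omega+a)}=\frac{(1+a)^2}{1+a+a^2}.
\]
This equals $\omega^2$ exactly when $a^2+\omega a+1=0$. That quadratic has no root in $\F_4$, so its roots lie in $\F_{16}\setminus\F_4$ and are admissible.

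The fifth point needs no separate check. Since $B_3\ni a\notin B_1$ and $B_3\ni 1\notin B_2$, the two-point intersection bound gives $B_3\cap B_1=\{1,\omega\}$ and $B_3\cap B_2=\{a,a\omega\}$. Hence $B_3$ avoids $0,\infty$, and its fifth point lies outside $B_1\cup B_2$.

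Your fallback to other pairings is also unnecessary. Scaling by $\F_4^\times$ fixes $B_1$ and $B_2$, and it brings any choice of pairs to your normal form after replacing $a$ by another element of $a\F_4^\times$.

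The paper skips this derivation and simply exhibits a certificate. With $\alpha^4+\alpha+1=0$, $\beta=\alpha^5$ and $g_3=\begin{pmatrix}1&\alpha\\ \alpha+1&1\end{pmatrix}$, it gets $B_3=\{\alpha,\beta,\beta^2,\alpha^3,\alpha\beta^2\}$. After multiplying by $\beta^2$, this is your normal form with $\omega=\beta$ and $a=\alpha^6$, which is a root of $a^2+\beta a+1$. Once completed, your cross-ratio argument explains where the paper's witness comes from, while the paper's version is a directly checkable certificate.
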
 
  \begin{proof} Fix \(\F_{16}=\F_2(\alpha)\) with \(\alpha^4+\alpha+1=0\), and set \(\beta:=\alpha^5\), so \(\F_4=\{0,1,\beta,\beta^2\}\subseteq \F_{16}\). Consider the following three subsets of \(\mathbb{P}^1(\F_{16})\): \[ B_1=\{\infty,0,1,\beta,\beta^2\}, \] \[ B_2=\{\infty,0,\alpha,\alpha\beta,\alpha\beta^2\}, \qquad B_3=\{\alpha,\beta,\beta^2,\alpha^3,\alpha\beta^2\}. \] Let \(W^{(1)}:=\F_4^2\), and define \[ g_2=\begin{pmatrix}\alpha&0\\0&1\end{pmatrix}, \qquad g_3=\begin{pmatrix}1&\alpha\\ \alpha+1&1\end{pmatrix}, \qquad W^{(2)}:=W^{(g_2)},\quad W^{(3)}:=W^{(g_3)}. \] A direct computation shows that \(B(W^{(i)})=B_i\) for \(i=1,2,3\).
    
    Let \[ \Omega_9:=B_1\cup B_2\cup B_3 =\{\infty,0,1,\beta,\beta^2,\alpha,\alpha\beta,\alpha\beta^2,\alpha^3\}. \] Again \(B_1\cap B_2\cap B_3=\emptyset\), so for each \(s\in\Omega_9\), we may choose \(r(s)\in\{1,2,3\}\) with \(s\notin B_{r(s)}\), and set \(W_s:=W^{(r(s))}\). Then \(W_s\cap \widetilde{\mathcal{H}}_s=\{0\}\), while \[ \sum_{\substack{u\in\Omega_9\\ u\ne s}} \dim(W_s\cap \widetilde{\mathcal{H}}_u) = |B_{r(s)}| = 5 = t \] for every \(s\in\Omega_9\). Hence \(\alpha_i\ge t=5\) for all nine selected node subspaces. By
    Lemma~\ref{lem:projective-counting-bound}, one also has \(\alpha_i\le t\), and therefore
    \(\alpha_i=t=5\). Also, each point of \(\Omega_9\) lies in at most two of \(B_1,B_2,B_3\). Therefore, for each \(s\in\Omega_9\), the corresponding set \[ Y_s:= \{\,\mathbb{P}(W_u\cap \widetilde{\mathcal{H}}_s): u\in\Omega_9,\ u\ne s,\ W_u\cap \widetilde{\mathcal{H}}_s\ne\{0\}\,\} \] has size at most \(2=\ell\). 
    
    Choose \(X_s\subseteq \mathbb{P}(\widetilde{\mathcal{H}}_s)\) to be any set of two distinct points spanning \(\widetilde{\mathcal{H}}_s\) and containing \(Y_s\). By the converse statements in Section~\ref{sec:intrinsic}, these data are realized by a
    \((9,7,2)\) MDS array code \(\mathcal{C}\). For every failed node \(s\in\Omega_9\), the choice \(X_u\supseteq Y_u\) ensures that
    \(\sum_{u\ne s} z_u(W_s)=5=t\). The conclusion now follows exactly as in the case \(q=3\).
  \end{proof}

  \subsection{A Combinatorial Lemma}
\label{app:combinatorial-lemma}

We need the following combinatorial lemma to prove Theorem~\ref{thm:two-parity-attainability-regular-spread}.

\begin{lemma}\label{lem:block-intersection-principle} Let \(\mathcal B\) be a family of \(t\)-subsets of \([n]\) such that \[ \bigcap_{B\in\mathcal B} B=\emptyset \] and \[ |B\cap B'|\le 2 \qquad\text{for all distinct }B,B'\in\mathcal B. \] Then \[ n\ge \min\{2t,\,3t-6\}. \] \end{lemma}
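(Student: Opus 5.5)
The plan is a short case analysis on the number of distinct members of $\mathcal B$, using only inclusion--exclusion. The two hypotheses will be used in different cases: the empty common intersection handles small families, and the pairwise bound $|B\cap B'|\le 2$ handles families with at least three members. In each case we bound $n$ from below by the size of a union of members of $\mathcal B$, which is contained in $[n]$.

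First, suppose $\mathcal B$ has exactly one member $B$. Then $\bigcap_{B'\in\mathcal B}B'=B$, so the hypothesis forces $B=\emptyset$. Hence $t=0$ and the bound $n\ge\min\{0,-6\}$ is trivial. (The case $\mathcal B=\emptyset$ is excluded, or is vacuous under the usual convention.) Next, suppose $\mathcal B=\{B_1,B_2\}$ has exactly two members. Then the empty-intersection hypothesis says $B_1\cap B_2=\emptyset$, so $n\ge|B_1\cup B_2|=2t\ge\min\{2t,3t-6\}$.

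Finally, suppose $\mathcal B$ contains three distinct members $B_1,B_2,B_3$. Inclusion--exclusion gives
\[
  n\ \ge\ |B_1\cup B_2\cup B_3|
  \ =\ 3t-|B_1\cap B_2|-|B_1\cap B_3|-|B_2\cap B_3|+|B_1\cap B_2\cap B_3|.
\]
Each pairwise intersection has size at most $2$ and the triple intersection is nonnegative, so $n\ge 3t-6\ge\min\{2t,3t-6\}$.

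There is no real obstacle here. The only point needing care is to recognise that the empty-intersection hypothesis is needed only when $|\mathcal B|\le 2$. With three or more members the pairwise bound alone suffices, and we do not have to choose $B_3$ cleverly to avoid $B_1\cap B_2$. In the application in Proposition~\ref{prop:regular-spread-necessary-length} we have $t=q+1\ge 4$, so every case is meaningful and the lemma yields $n\ge\min\{2q+2,3q-3\}$.
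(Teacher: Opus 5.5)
Your proof is correct, and it takes a different, simpler route than the paper's.

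\textbf{The paper's route.} It first extracts a minimal subfamily $B_1,\dots,B_m$ with empty intersection and picks witnesses $x_r\in\bigl(\bigcap_{s\ne r}B_s\bigr)\setminus B_r$. From $m-2\le|B_i\cap B_j|\le 2$ it gets $m\le 4$, and then treats three cases:
\begin{itemize}
\item $m=2$: the two sets are disjoint, giving $2t$;
\item $m=3$: inclusion--exclusion gives $3t-6$;
\item $m=4$: it shows $B_i\cap B_j=\{x_k,x_\ell\}$ and obtains the stronger $4t-8\ge 3t-6$.
\end{itemize}

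\textbf{Your route.} You split on the number of distinct members of $\mathcal B$. Once there are three distinct members, the Bonferroni inequality applied to any three of them already gives $n\ge 3t-6$, with no use of the empty-intersection hypothesis. That hypothesis is needed only to force disjointness when $|\mathcal B|=2$, or $t=0$ when $|\mathcal B|=1$. The paper's argument leaves the latter case, $m=1$, implicit.

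\textbf{Comparison.} Your observation shows that the paper's $m=4$ analysis is unnecessary, since any three of those four sets give $3t-6$ directly. What the paper's minimal-subfamily argument buys is structural information about the configuration witnessing the empty intersection: it involves at most four blocks, it has a rigid pattern when $m=4$, and it satisfies the sharper bound $4t-8$ in that case. This could matter for a refined statement, but not for the lemma as stated.
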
 \begin{proof} Choose a subfamily \(\{B_1,\dots,B_m\}\subseteq\mathcal B\) that is minimal with empty intersection, i.e. \[ B_1\cap\cdots\cap B_m=\emptyset, \qquad \bigcap_{s\ne r} B_s\neq\emptyset\ \ \text{for every }r\in[m]. \] For each \(r\in[m]\), pick \[ x_r\in \Bigl(\bigcap_{s\ne r} B_s\Bigr)\setminus B_r. \] Then \(x_1,\dots,x_m\) are pairwise distinct. Moreover, for any distinct \(i,j\in[m]\), every \(x_r\) with \(r\notin\{i,j\}\) lies in \(B_i\cap B_j\), so \[ m-2\le |B_i\cap B_j|\le 2. \] Hence \(m\le 4\). Since \(\bigcup_{r=1}^m B_r\subseteq [n]\), it suffices to lower-bound \(|\bigcup_{r=1}^m B_r|\). If \(m=2\), then \(B_1\cap B_2=\emptyset\), so \[ \Bigl|\bigcup_{r=1}^2 B_r\Bigr|=2t. \] If \(m=3\), then by inclusion--exclusion and \(|B_i\cap B_j|\le 2\), \[ \Bigl|\bigcup_{r=1}^3 B_r\Bigr| \ge 3t-\sum_{1\le i<j\le 3}|B_i\cap B_j| \ge 3t-6. \] If \(m=4\), then the inequality \(m-2\le |B_i\cap B_j|\le 2\) forces \(|B_i\cap B_j|=2\) for all \(i\ne j\). For \(\{i,j,k,\ell\}=\{1,2,3,4\}\), the points \(x_k,x_\ell\) both lie in \(B_i\cap B_j\), hence \[ B_i\cap B_j=\{x_k,x_\ell\}. \] In particular, any element of \([n]\setminus\{x_1,x_2,x_3,x_4\}\) belongs to at most one of \(B_1,\dots,B_4\). Also, by construction, each \(B_r\) contains exactly three of \(x_1,x_2,x_3,x_4\). Therefore each \(B_r\) contributes at least \(t-3\) elements outside \(\{x_1,x_2,x_3,x_4\}\) that lie in no other \(B_{r'}\). Consequently,
  \[
    \Bigl|\bigcup_{r=1}^4 B_r\Bigr|
    \ge 4+4(t-3)=4t-8.
  \]
  Since each \(B_r\) contains the three distinct points \(\{x_s:s\ne r\}\), one has \(t\ge 3\), and hence
  \[
    4t-8\ge 3t-6.
  \] In all cases, \[ n\ge \Bigl|\bigcup_{r=1}^m B_r\Bigr|\ge \min\{2t,\,3t-6\}. \] \end{proof} 


%

\phantomsection
\addcontentsline{toc}{section}{\refname}
\bibliographystyle{alphaurl}
\bibliography{references}


\end{document}